\documentclass{article}

\usepackage{microtype}
\usepackage{graphicx}
\usepackage{subcaption}
\usepackage{booktabs} % for professional tables
\usepackage{multirow}
\usepackage{siunitx}
\usepackage{subcaption}
\usepackage{hyperref}

\usepackage[preprint]{icml2026}

\usepackage{amsmath}
\usepackage{amssymb}
\usepackage{mathtools}
\usepackage{amsthm}
\usepackage{enumitem}
\usepackage{siunitx}

\usepackage{amsmath}
\usepackage{graphicx}
\usepackage{upgreek}
\newcommand{\R}{\mathbb{R}}

\providecommand{\By}{{\mathbf{y}}}
\providecommand{\Bx}{{\mathbf{x}}}
\providecommand{\Bw}{{\mathbf{w}}}
\providecommand{\Bv}{{\mathbf{v}}}
\providecommand{\Bz}{{\mathbf{z}}}
\providecommand{\BA}{{\mathbf{A}}}

\newcommand{\Veta}       {\boldsymbol{\upeta}} %

\DeclareMathOperator*{\argmin}{arg\,min}
\providecommand{\argmin}{\operatorname*{argmin}}

\usepackage{arydshln}

\usepackage[capitalize,noabbrev]{cleveref}

\theoremstyle{plain}
\newtheorem{theorem}{Theorem}[section]
\newtheorem{proposition}[theorem]{Proposition}

\theoremstyle{definition}

\newtheorem{assumption}[theorem]{Assumption}

\theoremstyle{remark}
\newtheorem{remark}[theorem]{Remark}

\usepackage[textsize=tiny]{todonotes}

\icmltitlerunning{Trajectory Stitching for Solving Inverse Problems with Flow-Based Models}

\begin{document}

\twocolumn[
    \icmltitle{Trajectory Stitching for Solving Inverse Problems with Flow-Based Models}

  % It is OKAY to include author information, even for blind submissions: the
  % style file will automatically remove it for you unless you've provided
  % the [accepted] option to the icml2026 package.

  % List of affiliations: The first argument should be a (short) identifier you
  % will use later to specify author affiliations Academic affiliations
  % should list Department, University, City, Region, Country Industry
  % affiliations should list Company, City, Region, Country

  % You can specify symbols, otherwise they are numbered in order. Ideally, you
  % should not use this facility. Affiliations will be numbered in order of
  % appearance and this is the preferred way.
  \icmlsetsymbol{equal}{*}

  \begin{icmlauthorlist}
    \icmlauthor{Alexander Denker}{ucl}
    \icmlauthor{Moshe Eliasof}{cam}
    \icmlauthor{Zeljko Kereta}{ucl}
    \icmlauthor{Carola-Bibiane Schönlieb}{cam}
    %\icmlauthor{}{sch}
    %\icmlauthor{}{sch}
  \end{icmlauthorlist}

  \icmlaffiliation{ucl}{Department of Computer Science, University College London}
  \icmlaffiliation{cam}{Department of Applied Mathematics and Theoretical Physics, University of Cambridge}

  \icmlcorrespondingauthor{Alexander Denker}{a.denker@ucl.ac.uk}

  % You may provide any keywords that you find helpful for describing your
  % paper; these are used to populate the "keywords" metadata in the PDF but
  % will not be shown in the document
  \icmlkeywords{Flow-Based Generative Models, Inverse Problems, Latent Space Optimization}

  \vskip 0.3in
]

% this must go after the closing bracket ] following \twocolumn[ ...

% This command actually creates the footnote in the first column listing the
% affiliations and the copyright notice. The command takes one argument, which
% is text to display at the start of the footnote. The \icmlEqualContribution
% command is standard text for equal contribution. Remove it (just {}) if you
% do not need this facility.

% Use ONE of the following lines. DO NOT remove the command.
% If you have no special notice, KEEP empty braces:
\printAffiliationsAndNotice{}  % no special notice (required even if empty)
% Or, if applicable, use the standard equal contribution text:
% \printAffiliationsAndNotice{\icmlEqualContribution}

\begin{abstract}
Flow-based generative models have emerged as powerful priors for solving inverse problems. 
One option is to directly optimize the initial latent code (noise), such that the flow output solves the inverse problem.   
However, this requires backpropagating through the entire generative trajectory, incurring high memory costs and numerical instability.
We propose MS-Flow, which represents the trajectory as a sequence of intermediate latent states rather than a single initial code.
By enforcing the flow dynamics locally and coupling segments through trajectory-matching penalties, MS-Flow alternates between updating intermediate latent states and enforcing consistency with observed data.
This reduces memory consumption while improving reconstruction quality.
We demonstrate the effectiveness of MS-Flow over existing methods on image recovery and inverse problems, including inpainting, super-resolution, and computed tomography.
\end{abstract}

\section{Introduction}
\label{sec:introduction}
\begin{figure}[t]
    \centering
    \includegraphics[width=0.9\linewidth]{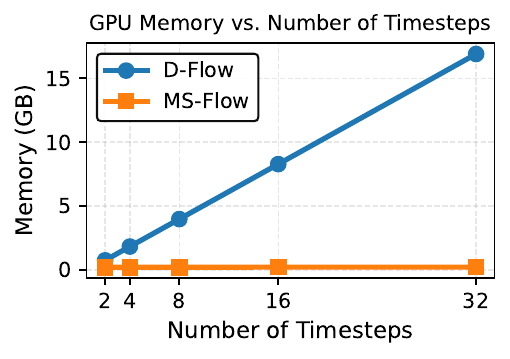}
    \caption{Peak GPU memory for  D-Flow vs MS-Flow (Ours) using Euler discretization of the ODE on CelebA. D-Flow scales linearly with the number of timesteps, while MS-Flow is constant.}
    \label{fig:timesteps_vs_memory}
    \vspace{-0.5cm}
\end{figure}

Inverse problems in imaging, such as super-resolution, inpainting, and compressed sensing, aim to recover a high-fidelity signal $\Bx \in \R^n$ from noisy, indirect measurements $\By \in \R^m$. 
This process is typically modeled by the forward equation $\By = \BA \Bx + \Veta$, where $\BA$ is a known forward operator and $\Veta$ represents measurement noise. 
Due to the ill-posed nature of this problem, relying solely on data consistency is insufficient, and regularization is required to constrain the solution space to valid images \cite{EnglHankeNeubauer1996}.  

Deep generative models have emerged as powerful regularization methods \cite{duff2024regularising}.
By learning a mapping $g_\theta: \R^n \to \R^n$ that pushes a simple latent distribution $p_\Bz$ (e.g., $\mathcal{N}(0, \mathbf{I})$) to a complex data distribution $p_\Bx$, these models act as powerful, learned priors. 
A common strategy for leveraging these priors is Latent Space Optimization (LSO) \cite{bora2017compressed, menon2020pulse}. 
LSO seeks a reconstruction $\hat{\Bx} = g_\theta(\hat{\Bz})$ by optimizing the latent code via
\begin{equation}
\label{eq:LSO}
    \hat{\Bz} \in \arg \min_{\Bz \in \R^n}\Phi(g_\theta(\Bz)) + \lambda \mathcal{R}(\Bz),
\end{equation}
where $\Phi$ enforces data consistency (e.g., $\Phi(\Bx)=\tfrac12\|\BA\Bx-\By\|^2$ under Gaussian noise), and $\mathcal{R}$ acts as a regularizer for the latent code, ensuring that $\Bz$ remains within high-density regions of the latent space. 

LSO has achieved remarkable success in imaging inverse problems. Key examples are super-resolution \cite{menon2020pulse}, compressed sensing \cite{bora2017compressed}, or inpainting \cite{asim2020invertible}, leveraging a range of deep generative models, e.g., normalizing flows \cite{rezende2015variational}, variational autoencoders \cite{kingma2013auto}, or generative adversarial networks \cite{goodfellow2014generative}, as backbones. LSO has also been used in conjunction with diffusion models \cite{wang2024dmplug,chihaoui2024blind}. Importantly, LSO is not limited to inverse problems but also underlies tasks such as classifier-guided or conditional sampling, where $\Phi$ encodes semantic constraints rather than data-consistency. 

In continuous-time generative models, such as continuous normalizing flows \cite{chen2018neural,grathwohl2018ffjord} or flow-based models \cite{lipman2022flow}, the generator $g_\theta$ is not given in closed form. Instead, samples are defined as the solutions of an ordinary differential equation (ODE)
\begin{equation}
\label{eq:flow_ode}
\frac{d\Bx(t)}{dt} = v_\theta(\Bx(t), t), \qquad \Bx(0)=\Bz,\quad t\in[0,1],
\end{equation}
where $v_\theta$ is a trained model.
Substituting \Cref{eq:flow_ode} into \Cref{eq:LSO} turns LSO into an ODE-constrained optimization problem \cite{lions1971optimal,hinze2008optimization}. 
The current state of the art, D-Flow \cite{ben2024d}, tackles LSO via a \textit{single-shooting} strategy, in which the ODE is discretized, and gradients are backpropagated through the resulting trajectory. While effective, this approach suffers from two critical limitations:
\begin{enumerate}[nosep,leftmargin=2.5em]
    \item[\textbf{(L1)}] \textbf{High Memory Cost.} Backpropagation through the full ODE trajectory requires storing, or recomputing, intermediate states. Checkpointing \cite{chen2016training} trades memory for compute, while adjoint methods \cite{chen2018neural} require solving an additional backward ODE, increasing the computational cost. 
    \item[\textbf{(L2)}] \textbf{Poor Conditioning.} The generated image $\Bx(1)$ depends nonlinearly on the initial latent code $\Bz=\Bx(0)$. Optimizing through a long composition of neural networks can cause vanishing or exploding gradients, degrading convergence and reconstruction quality.
\end{enumerate}
To overcome these limitations, we propose MS-Flow, a \textit{multiple-shooting} framework for LSO with flow-based generative models. MS-Flow decomposes the ODE trajectory into short segments and replaces the hard global ODE constraint with soft stitching penalties. This yields:
\begin{enumerate}[noitemsep, topsep=0pt]
    \item \textbf{Decoupled Optimization.} We employ an alternating minimization strategy that separates the neural dynamics from data-consistency updates.
    \item \textbf{Reduced Memory Footprint.} Optimization avoids backpropagation through the full time horizon, resulting in memory usage that is constant with respect to the ODE discretization.
\end{enumerate}
Moreover, we introduce a Jacobian-free update for trajectory variables, further reducing memory and computational cost, resulting in a fast, scalable and stable method. See \Cref{fig:timesteps_vs_memory} and Section~\ref{sec:experiments} for details. Further, we compare with other recently proposed inverse problem solvers making use of flow models. We discuss related works in Appendix \ref{app:related_work}.

\section{Background and Preliminaries}
\label{sec:background}
In this section, we provide an overview of the definitions and methodologies used throughout this paper.  

\subsection{Flow-based Models}
\label{sec:FM_background}
Flow Matching (FM) 
\cite{lipman2022flow} trains continuous normalizing flows using a simulation-free objective. In FM, a time-dependent vector field $v_\theta: \R^n \times [0,1] \to \R^n$ that induces a probability path $p_t$ transporting a simple base distribution, e.g., a standard Gaussian, to the data distribution $p_1$ is learned. The training objective is to align $v_\theta$ with a ground-truth velocity field $u_t$ that generates the target path $p_t$, leading to the flow matching objective 
\begin{align}
\mathcal{L}_{\text{FM}}(\theta) = \mathbb{E}_{t, \Bx \sim p_t(\Bx)} \| v_\theta(\Bx, t) - u_t(\Bx) \|^2 .
\end{align}
However, as the evaluation of the marginal velocity $u_t$ is generally intractable, FM adopts a conditional formulation. Conditional FM training matches a conditional velocity field
\begin{align}
\label{eq:cfm_loss}
\mathcal{L}_{\text{CFM}}(\theta) = \mathbb{E}_{t, \Bx_0, \Bx_1, } \| v_\theta(\Bx_t, t) - u_t(\Bx_t \mid \Bx_0, \Bx_1) \|^2,
\end{align}
with $\Bx_t \sim p_t(\Bx_t \mid \Bx_0, \Bx_1)$. The gradients of the conditional and marginal objectives coincide, i.e., $\nabla_\theta \mathcal{L}_{\text{CFM}}(\theta) = \nabla_\theta \mathcal{L}_{\text{FM}}(\theta)$ \cite{lipman2022flow, tong2023improving}. Sampling is obtained by simulating the ODE in \Cref{eq:flow_ode}.
In practice, the conditional path is often defined via affine interpolation between a source sample $\Bx_0 \sim p_0$ and a target data point $\Bx_1 \sim p_1$, e.g., with an optimal transport displacement $\Bx_t = (1-t) \Bx_0 + t \Bx_1$ \cite{liu2022rectified}, which yields a constant conditional velocity $u_t(\Bx_t | \Bx_0, \Bx_1) = \Bx_1 - \Bx_0$. With $\Bx_t = (1-t) \Bx_0 + t \Bx_1$, this simplifies the CFM objective to 
\begin{align}
    \mathcal{L}_{\text{CFM}}(\theta) = \mathbb{E}_{t, \Bx_0, \Bx_1, } \| v_\theta(\Bx_t, t) - (\Bx_1 - \Bx_0) \|^2.
\end{align}

\subsection{ODE-constrained Optimization}
Many optimization problems in control, inverse problems, and machine learning involve objectives that depend on the solution of a dynamical system. A standard form of ODE-constrained optimization is
\begin{align*}
    \min_u \Phi(\Bx(T),u), \text{s.t.} \frac{d\Bx(t)}{dt}=v(\Bx(t), t,u), \Bx(0)=\Bx_0
\end{align*}
where $t\in[0,T]$, $\Bx(t)\in\mathbb{R}^n$ is the system state, and $u$ represents 
% denotes optimization variables such as 
controls, parameters, or initial conditions \cite{lions1971optimal,hinze2008optimization}. 
In flow-based generative models, the dynamics $v_\theta$ are learned and the initial condition $\Bx(0)=\Bz$ is optimized. Since the objective depends on the terminal state $\Bx(T)$, gradient computation is inherently coupled to ODE integration. Two widely used frameworks for addressing this class of problems are:

\textbf{Single-Shooting} enforces the dynamics exactly by parametrizing the entire trajectory through $\Bx(0)$. While conceptually simple, it is memory-intensive and often poorly conditioned for nonlinear dynamics \cite{betts2010practical}.

\textbf{Multiple-Shooting} lifts the problem by introducing intermediate states $\{\Bx_k\}_{k=0}^K$ at time points $\{t_k\}_{k=0}^K$ \cite{bock1984multiple}. Dynamics are enforced locally on each subinterval, while continuity is imposed via constraints or penalties. This breaks a long trajectory into shorter segments, improves conditioning, and enables structured optimization methods such as alternating minimization or block coordinate descent, without differentiating through the full time horizon.

\section{Multiple-Shooting Flow}
\label{sec:method}
We now present MS-Flow, our multiple-shooting approach for solving inverse problems with flow-based models. 

\subsection{From Single- to Multiple-Shooting}

Existing approaches, such as D-Flow \cite{ben2024d}, consider a \emph{single-shooting} approach and aim to solve the constrained optimization problem 
\begin{align}
\label{eq:constrained_ode}
\small
    \min_{\Bx_0} \Phi(\Bx(1)), \text{ s.t. } \frac{d\Bx}{dt}\!=\!v_\theta(\Bx, t),\!t\!\in [0, 1], \Bx(0) = \Bx_0,
\end{align}
where $v_\theta$ is a pre-trained (frozen) flow matching model. In particular, the optimization is done with respect to the initial value of the ODE, i.e., $\Bx(0)$. As discussed in \Cref{sec:introduction}, while viable, this approach comes with a high memory cost because it requires gradients to be backpropagated through the full discretized trajectory, and in particular through $v_\theta$.

To address these limitations, we adopt the multiple shooting principle \cite{bock1984multiple, wirsching2007online}. We partition the time interval $[0,1]$ into $K$ segments with grid points $0=t_0<t_1<\dots<t_K=1$. That is, instead of optimizing a single trajectory generated from $\Bx_0$, we optimize over a collection of decision variables $\{\Bx_0,\dots,\Bx_K\}$, representing intermediate states, also referred to as shooting points, along the trajectory. Therefore, we call our approach MS-Flow, whose objective reads
\begin{equation}
\begin{aligned}
\min_{\Bx_*,\,\Bx_{0:K}}\;
&\underbrace{\Phi(\Bx_*)}_{\text{Data Consistency}} + \underbrace{\lambda \mathcal{R}(\Bx_0)
   +  \frac{\alpha}{2} \| \Bx_* - \Bx_K \|^2}_{\text{Regularization}}  \\
&\quad +  \underbrace{\frac{\gamma}{2}\sum_{k=1}^K
   \| \Bx_k - F_{k-1,k}(\Bx_{k-1}) \|^2}_{\text{Trajectory Consistency}} ,
\end{aligned}
\label{eq:augmented_LSO}
\end{equation}
where $F_{k-1,k}(\Bx) = \texttt{odeint}(\Bx, v_\theta, [t_{k-1}, t_k])$ denotes the ODE  integration from time $t_{k-1}$ to $t_k$ starting from state $\Bx$.

This reformulation decouples optimization from exact ODE feasibility at every iteration. By relaxing trajectory continuity through quadratic penalties, the optimization no longer requires differentiating through the full time horizon, improving numerical conditioning \cite{wright1999numerical}. Moreover, the auxiliary variable $\Bx_*$ further separates data-consistency from trajectory integration, enabling the terminal state to deviate from the ODE solution and providing a trade-off between adherence to the dynamics and data consistency.
Overall, MS-Flow decomposes the optimization problem into the following three components.

\textbf{(i) Data Consistency.} The term $\Phi(\Bx_*)$ encodes the desired behavior of model output $\Bx_*$. In the context of inverse problems we choose $\Phi(\Bx) =\frac{1}{2} \| \BA \Bx - \By \|^2$, enforcing data consistency.

\textbf{(ii) Regularization.} The coupling term $\| \Bx_* - \Bx_K \|^2$ encourages the predicted inverse problem solution $\Bx_*$ to lie near the manifold of images generated by the flow model. The initial point regularizer $\mathcal{R}(\Bx_0)$ encourages the trajectory to start from high-probability regions of the latent distribution. A common choice is the negative log-likelihood of the base distribution, $\mathcal{R}(\Bx_0)=-\log p(\Bx_0)$. For a standard Gaussian, this yields $\mathcal{R}(\Bx_0)=\frac{1}{2}\|\Bx_0\|^2$ \cite{bora2017compressed}. However, this regularizer concentrates all the mass at the origin, even though typical samples from a Gaussian lie at a radius $O(\sqrt{d})$, see, e.g., \cite{bishop2006pattern}. We instead employ a radial Gaussian prior \cite{farquhar2020radial,samuel2023norm,ben2024d}, in which the radius $r=\|\Bx_0\|$ follows a $\chi^d$ distribution. The corresponding negative log-likelihood is 
\begin{align}
    \label{eq:radial_gaussian_prior}
    \mathcal{R}(\Bx_0)=(d-1) \log \|\Bx_0\| - \frac{\| \Bx_0\|^2}{2} +c,
\end{align}
where $c$ is a constant independent of $\Bx_0$. This choice aligns the regularizer with the typical set of the latent distribution, so that the optimization is not biased to atypical latent codes. 

\textbf{(iii) Trajectory Consistency.} The trajectory penalty enforces continuity between shooting points, ensuring consistency with the underlying flow dynamics.

Finally, for practical implementation, we employ an explicit Euler step for the discretization of $F_{k-1,k}$ with step size $\Delta_k = t_k - t_{k-1}$, that is further discussed in Appendix \ref{app:numerical_details}. In this case, the objective simplifies to
\begin{equation}
\begin{aligned}
&\min_{\Bx_*,\,\Bx_{0:K}}\;
 \Phi(\Bx_*) 
   + \frac{\alpha}{2} \| \Bx_* - \Bx_K \|^2  + \lambda \mathcal{R}(\Bx_0)  \\
&\quad + \frac{\gamma}{2} \sum_{k=1}^K \| \Bx_k - [\Bx_{k-1} + \Delta_k v_\theta(\Bx_{k-1}, t_{k-1})] \|^2.
\end{aligned}
\label{eq:augmented_LSO_Euler}
\end{equation}

\subsection{Optimizing the MS-Flow Objective} 
\label{sec:optimizingMSFLOW}

In this section, we discuss the optimization of our MS-Flow objective in \Cref{eq:augmented_LSO}, which lends itself to an alternating minimization strategy \cite{wright1999numerical} that decouples the trajectory consistency from the data consistency. That is, we iterate between solving for the shooting points $X^{i}\coloneqq[\Bx_0^i, \dots, \Bx_K^i]$ and for the final inverse problem solution estimate $\Bx^{i}_*$, where $i$ is the iteration index, as
\begin{subequations}
    \begin{align}
    \begin{split}
     X^{i+1} &= \argmin_{[\Bx_0, \dots, \Bx_K]} \frac{\alpha}{2} \| \Bx_*^i - \Bx_K \|^2+ \lambda \mathcal{R}(\Bx_0) \\ & +\frac{\gamma}{2} \sum_{k=1}^K \| \Bx_k - F_{k-1,k}(\Bx_{k-1}) \|^2,  
    \end{split}\label{eq:splitting_traj} \\
    \Bx_*^{i+1} &= \argmin_{\Bx_*} \Phi(\Bx_*) + \frac{\alpha}{2} \| \Bx_* - \Bx_K^{i+1} \|^2.\label{eq:splitting_data}
    \end{align}
\end{subequations}
The overall procedure for optimizing these problems is summarized in \Cref{alg:am_coord_descent}. In what follows, we provide details and discussions on the different components in \Cref{alg:am_coord_descent}.

\textbf{Solving Trajectory Consistency (\Cref{eq:splitting_traj}).}
Define
\begin{align}
\begin{split}
        \mathcal{J}(\Bx_0, \dots, \Bx_K) :=  \frac{\alpha}{2} \| \Bx_*^i - \Bx_K \|^2 + \lambda \mathcal{R}(\Bx_0)   \\ + \frac{\gamma}{2} \sum_{k=1}^K \| \Bx_k - F_{k-1,k}(\Bx_{k-1}) \|^2.  \label{eq:traj_loss}
\end{split}
\end{align}
We see that the gradient of \Cref{eq:traj_loss} with respect to~$\Bx_k$  depends only on the intermediate neighbors $\Bx_{k-1}$ and $\Bx_{k+1}$. This motivates using coordinate descent, a common choice in such cases \cite{wright1999numerical}. That is, we perform a \textit{backward sweep} from the terminal point $\Bx_K$ to the initial point $\Bx_0$, iteratively updating each shooting point $\Bx_k$ with a gradient descent step with step size $\eta > 0$, and using the most recently updated neighbors, as described in Lines 6--14 in \Cref{alg:am_coord_descent}. When the transition maps $F_{k,k+1}$ are discretized using explicit Euler with step size $\Delta_k = t_{k+1}-t_k$, the coordinate-descent updates involve vector--Jacobian products (VJPs) with:
\begin{align}
    J_{F_{k,k+1}}(\Bx) = I + \Delta_k \, J_{v_\theta(\Bx,t_k)}.
\end{align}
To avoid back-propagation of the flow model, we use a \textit{Jacobian-free approximation} via $J_{F_{k,k+1}}(\Bx) \approx I$. We analyze the convergence properties of this choice in \Cref{sec:traj_convergence}. 

\begin{algorithm}[t]
\caption{Alternating Minimization with Coordinate Descent Trajectory Optimization of the MS-Flow Objective}
\label{alg:am_coord_descent}
\begin{algorithmic}[1]
\STATE \textbf{Input:} forward operator $\BA$, data $\By$, flow model $v_\theta$, hyperparameters $\alpha, \gamma, \lambda$, number of trajectory update steps $L$, step size $\eta$.
\STATE \textbf{Initialize:} Initial inverse solution $\Bx_*^{0}$, initial trajectory sequence $X^{0} = [\Bx_0^{0}, \dots, \Bx_K^{0}]$.
\STATE $i \leftarrow 0$
\WHILE{not converged}
    \STATE \textcolor{gray}{\textit{// 1. Backward Sweep of Trajectory Coordinates}}
    \STATE Initialize trajectory iterate $X^{(\ell=0)} \leftarrow X^{i}$
    \FOR{$\ell = 1$ to $L$}
        \STATE Update $\Bx_K^{(\ell)}$ using $\Bx_*^i$ and $\Bx_{K-1}^{(\ell -1)}$ (Eq. \ref{eq:update_finalpt})
        \FOR{$k = K-1$ down to $1$}
            \STATE Update $\Bx_k^{(\ell)}$ using $\Bx_{k-1}^{(\ell-1)}$ and $\Bx_{k+1}^{(\ell)}$ (Eq. \ref{eq:update_interpts})
        \ENDFOR
        \STATE Update $\Bx_0^{(\ell)}$ using $\Bx_1^{(\ell)}$  (Eq. \ref{eq:update_initpt})
    \ENDFOR
    \STATE Set $X^{i+1} \leftarrow X^{(L)}$
    \STATE \textcolor{gray}{\textit{// 2. Data Consistency Optimization}} % (Proximal Step)
    \STATE Update inverse solution estimate:
    \begin{equation*}
        \Bx_*^{i+1} = \argmin_{\Bx_*} \Phi(\Bx_*) + \frac{\alpha}{2} \| \Bx_* - \Bx_K^{i+1} \|^2.
    \end{equation*}
    \STATE $i \leftarrow i + 1$
\ENDWHILE
\STATE \textbf{Output:} Reconstructed image $\Bx_*^{i}$ and trajectory $X^{i}$.
\end{algorithmic}
\end{algorithm}

\textbf{Solving Data Consistency (\Cref{eq:splitting_data}).} The data consistency step (Lines 15–17 in \Cref{alg:am_coord_descent}) consists of solving a Tikhonov-regularized optimization problem that balances the objective $\Phi$ with proximity to the current shooting point $\Bx_K^{i+1}$. We can consider two special cases:
\begin{enumerate}[noitemsep, topsep=0pt]
    \item If $\Phi$ is a closed proper convex function, we can identify \Cref{eq:splitting_data} with the proximal operator \cite{parikh2014proximal}. In particular, this allows us to apply MS-Flow to non-smooth convex objective functions, e.g., those arising in inverse problems with salt-and-pepper or impulse noise \cite{nikolova2004variational}.
    \item For linear inverse problems with additive noise, the common choice is $\Phi(\Bx) = \frac{1}{2} \| \BA \Bx - \By \|^2$, and the solution of \Cref{eq:splitting_data} is given in explicit form as
    \begin{align}
  \Bx_*^{i+1}  = (\BA^T \BA + \alpha I)^{-1}(\BA^T \By + \alpha \Bx_K^{i+1}).
\end{align}
\end{enumerate}

\subsection{Complexity of MS-Flow}
\label{sec:complexity}
We compare the time and space complexity of MS-Flow with the single-shooting baseline D-Flow, focusing on the trajectory-consistency subproblem in \Cref{eq:splitting_traj}.
The data-consistency update in \Cref{eq:splitting_data} is shared across methods. In particular, its cost depends on $\Phi$; for quadratic $\Phi(\Bx)=\tfrac12\|\BA\Bx-\By\|^2$ it reduces to solving a linear system.

Let $n$ be the state dimension, and $n_t$ the number of time-discretization steps for integrating the flow ODE. MS-Flow partitions the interval into $K$ shooting segments (with $K=n_t$ for explicit Euler). 
We denote by $\mathsf{C}_{\mathrm{fwd}}$ and $\mathsf{C}_{\mathrm{vjp}}$ the cost of one forward and one VJP through $v_\theta$, and by $\mathsf{M}_{\mathrm{net}}$ the network's peak activation memory. See Appendix~\ref{app:complexity_details} for time and memory complexity calculation details.

\begin{table}[t]
\centering
\caption{Time and memory complexity comparison between D-Flow and MS-Flow. Here $n$ is the state dimension, $n_t$ the number of ODE discretization steps, $K$ the number of shooting intervals (with $K=n_t$ for explicit Euler), $L$ the number of inner iterations, $\mathsf{C}_{\mathrm{fwd}}$ and $\mathsf{C}_{\mathrm{vjp}}$ the costs of one forward and backward pass through $v_\theta$, and $\mathsf{M}_{\mathrm{net}}$ the peak activation memory of the network.}
\label{tab:complexity_comparison}
\resizebox{0.5\textwidth}{!}{
\begin{tabular}{lcc}
\toprule
\textbf{Method} 
& \textbf{Time per iteration} 
& \textbf{Peak memory} 
 \\
\midrule
D-Flow (single shooting) 
& $O\!\left(n_t(\mathsf{C}_{\mathrm{fwd}}+\mathsf{C}_{\mathrm{vjp}})\right)$
& $O\!\left(n_t\,\mathsf{M}_{\mathrm{net}}\right)$
 \\

D-Flow (adjoint) 
& $O\!\left(n_t(\mathsf{C}_{\mathrm{fwd}}+\mathsf{C}_{\mathrm{vjp}})\right)$
& $O(\mathsf{M}_{\mathrm{net}})$
 \\

MS-Flow (exact gradients) 
& $O\!\left(LK(\mathsf{C}_{\mathrm{fwd}}+\mathsf{C}_{\mathrm{vjp}})\right)$
& $O(\mathsf{M}_{\mathrm{net}})$
\\

MS-Flow (Jacobian-free) 
& $O\!\left(LK\,\mathsf{C}_{\mathrm{fwd}}\right)$
& $O(\mathsf{M}_{\mathrm{net}})$
\\
\bottomrule
\end{tabular}}
\end{table}

\textbf{Time complexity.}
As summarized in \Cref{tab:complexity_comparison},  D-Flow costs $O\!\left(n_t(\mathsf{C}_{\mathrm{fwd}}+\mathsf{C}_{\mathrm{vjp}})\right)$ per iteration, due to forward integration followed by backpropagation through the entire discretized trajectory.
In contrast, one outer iteration of MS-Flow with $L$ backward sweep steps costs
$O\!\left(LK(\mathsf{C}_{\mathrm{fwd}}+\mathsf{C}_{\mathrm{vjp}})\right)$,
scaling linearly in the number of shooting points while avoiding global backpropagation through time. The Jacobian-free variant further reduces the cost to $O(LK\,\mathsf{C}_{\mathrm{fwd}})$.

\textbf{Memory complexity.}
The cost for D-Flow arises from storing intermediate activations across $n_t$ solver steps, leading to peak memory $O(n_t\,\mathsf{M}_{\mathrm{net}})$, unless adjoint methods are used \cite{chen2018neural}. MS-Flow introduces state variables $\{\Bx_k\}_{k=0}^K$, but does not require storing the full computational graph. Each shooting interval is processed locally, yielding peak memory $O(\mathsf{M}_{\mathrm{net}}) + O(Kn)$, which is constant w.r.t. the discretization when network activations dominate. This is evaluated in \Cref{fig:timesteps_vs_memory}. For example, the activations of the model in Section \ref{sec:image_restoration} require about $550$MB, whereas one state $\Bx_k$ only requires $0.19$MB of GPU memory.

\textbf{Parallelization.}
Because the trajectory continuity is enforced locally, MS-Flow admits parallel computation across shooting intervals $F_{k,k+1}$. Forward evaluations of $v_\theta$ can be batched to trade memory for computation time. In contrast, single-shooting methods are inherently sequential in time and do not admit such parallelism. We demonstrate the parallelization of MS-Flow in \Cref{fig:gradient_comparison}.

\section{Properties of MS-Flow}
\label{sec:properties}
We now establish key theoretical properties of MS-Flow. We first revisit the stability benefits of multiple shooting. 
Then, we prove that the inner trajectory updates in Algorithm \ref{alg:am_coord_descent} converge. This is important because MS-Flow is an inference-time optimization procedure: convergence ensures the trajectory objective decreases reliably, yielding stable and reproducible reconstructions. 
Finally, we show that the memory-efficient implementation based on approximate (Jacobian-free) gradients retains convergence guarantees by enforcing a sufficient-decrease condition.

\subsection{Stability of Multiple-Shooting}
\label{sec:ms_stability}
A standard difficulty with single shooting is long-horizon sensitivity: if $v_\theta$ is $L$-Lipschitz in $\Bx$ (uniformly in $t$), then by Gr\"onwall's inequality the flow map satisfies
\begin{equation}
\label{eq:ss_sensitivity}
\|\Bx(t;\Bx_1)-\Bx(t;\Bx_2)\|_2 \le e^{Lt}\,\|\Bx_1-\Bx_2\|_2.
\end{equation}
Thus, perturbations can be exponentially amplified over the full time horizon \cite{hairer1993solving}. Multiple shooting alleviates this by splitting the time interval into shorter segments $0=t_0<\cdots<t_K=1$ and enforcing dynamics locally. On each interval $[t_k,t_{k+1}]$ of length $\Delta_k$, the same bound applies with $t$ replaced by $\Delta_k$, yielding a local amplification factor $e^{L\Delta_k}$ \cite{hairer1993solving}. In MS-Flow, this translates into better-conditioned trajectory updates: the defect penalties $\|\Bx_{k+1}-F_{k,k+1}(\Bx_k)\|_2^2$ control errors locally rather than allowing them to accumulate across the entire trajectory.

\subsection{Convergence of Trajectory Updates}
\label{sec:traj_convergence}
We now study the convergence properties of the coordinate descent algorithm used to solve the trajectory update in \Cref{eq:splitting_traj}. At the $i$-th outer iteration of the optimization, we fix the  estimate $\Bx_*^i$ and update the trajectory
$X \coloneqq [\Bx_0,\dots,\Bx_K]$ by minimizing \Cref{eq:traj_loss}. The backward sweep in Algorithm~\ref{alg:coord_descent} achieves that by performing a cyclic block update in Gauss-Seidel order: for $\ell=1,\dots,L$, we update blocks
$k=K,K-1,\dots,0$ as
\begin{equation}
\label{eq:gs_block_update}
\Bx_k^{(\ell)} = \Bx_k^{(\ell-1)} - \eta_k\,
\nabla_{\Bx_k}\mathcal{J}_i ( X^{(\ell,k)}).
\end{equation}
Above $X^{(\ell, k)}\!=\![\Bx_0^{(\ell-1)},\!\dots\!,\Bx_{k-1}^{(\ell-1)},\Bx_k^{(\ell-1)},\Bx_{k+1}^{(\ell)}, \dots,\Bx_K^{(\ell)}]$ denotes the Gauss-Seidel iterate with blocks $0$ to $k$ at $\ell\!-\!1$ and blocks $k+1$ to $K$ at $\ell$. That is, to update $\Bx_k^{\ell}$ we use the most recently updated trajectory points at later time indices $\{\Bx_{k+1}^{(\ell)},\dots,\Bx_K^{(\ell)}\}$.
We emphasize that \Cref{eq:gs_block_update} refers to the \emph{exact} partial gradient of \Cref{eq:traj_loss}.
When using efficient approximations such as Jacobian-free updates, we enforce a sufficient decrease condition via a standard backtracking line search on
$\mathcal{J}_i$, as discussed in Remark~\ref{rem:inexact_updates}.

We now formalize the behavior of the backward sweep used in MS-Flow. In Proposition \ref{prop:gs_stationarity}, we show that under regularity assumptions and appropriate step sizes, the cyclic Gauss-Seidel updates are guaranteed to decrease the trajectory objective $\mathcal{J}_i$ at every sweep. Moreover, the updates asymptotically stabilize, and any accumulation point of inner iterates is a first-order stationary point of $\mathcal{J}_i$. 
The proof is given in \Cref{app:assumptions_proofs}.

\begin{proposition}[Monotone descent and stationarity with MS-Flow]
\label{prop:gs_stationarity}
Under regularity assumption (Assumption~\ref{ass:gs_convergence}), the iterates $\{X^{(\ell)}\}_{\ell\ge 0}$ generated by \Cref{eq:gs_block_update} satisfy:
\begin{enumerate}[leftmargin=1.5em,nosep]
    \item \textbf{Monotone decrease:} $\mathcal{J}_i(X^{(\ell+1)}) \le \mathcal{J}_i(X^{(\ell)})$ for all $\ell$.
    \item \textbf{Vanishing steps:} $\sum_{\ell\ge 0}\sum_{k=0}^K \|\Bx_k^{(\ell+1)}-\Bx_k^{(\ell)}\|_2^2 < \infty$, hence $\|\Bx_k^{(\ell+1)}-\Bx_k^{(\ell)}\|_2\to 0$ for every $k$.
    \item \textbf{Stationarity of limit points:} every accumulation point $\bar X$ of $\{X^{(\ell)}\}$ is a first-order stationary point of $\mathcal{J}_i$, that is,
    $\nabla \mathcal{J}_i(\bar X)=0$.
\end{enumerate}
\end{proposition}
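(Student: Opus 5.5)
The argument is the standard block-coordinate-descent (Gauss--Seidel) analysis for a block-smooth objective that is bounded below. I assume Assumption~\ref{ass:gs_convergence} supplies: (a) $\mathcal{J}_i$ is bounded below and continuously differentiable on an open set containing the sublevel set $\{X:\mathcal{J}_i(X)\le \mathcal{J}_i(X^{(0)})\}$. For the radial prior \Cref{eq:radial_gaussian_prior}, this means staying away from $\Bx_0=0$. (b) Each partial gradient $\nabla_{\Bx_k}\mathcal{J}_i$ is $L_k$-Lipschitz in $\Bx_k$, with the other blocks fixed, on that set. (c) The step sizes satisfy $0<\eta_k\le c/L_k$ for some $c<2$, e.g.\ $\eta_k\le 1/L_k$. (d) The full gradient $\nabla\mathcal{J}_i$ is Lipschitz on the sublevel set, with constant $L_\nabla$. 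For the Euler discretisation these follow from Lipschitz continuity of $v_\theta$ and $J_{v_\theta}$, since the penalties are quadratic compositions with $F_{k,k+1}$.

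\textbf{Step 1 (per-block descent).} Fix a sweep $\ell$ and a block $k$. Apply the descent lemma to the map $\Bx_k\mapsto\mathcal{J}_i$ with the other blocks frozen at their values in $X^{(\ell,k)}$. With $\eta_k\le 1/L_k$ this gives
\begin{equation*}
\mathcal{J}_i(X^{(\ell,k-1)}) \le \mathcal{J}_i(X^{(\ell,k)}) - \tfrac{1}{2\eta_k}\|\Bx_k^{(\ell)}-\Bx_k^{(\ell-1)}\|_2^2 ,
\end{equation*}
where $X^{(\ell,k-1)}$ is the iterate after block $k$ has been updated. Telescoping over $k=K,\dots,0$ yields $\mathcal{J}_i(X^{(\ell)})\le \mathcal{J}_i(X^{(\ell-1)})-c_0\sum_k\|\Bx_k^{(\ell)}-\Bx_k^{(\ell-1)}\|^2$ with $c_0=\min_k 1/(2\eta_k)$. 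This proves item 1. Induction keeps all iterates in the sublevel set, so the Lipschitz constants stay valid.

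\textbf{Step 2 (summability).} Sum the sweep inequality over $\ell$. Because $\mathcal{J}_i$ is bounded below, $c_0\sum_\ell\sum_k\|\Delta\Bx_k^{(\ell)}\|^2\le \mathcal{J}_i(X^{(0)})-\inf\mathcal{J}_i<\infty$, which gives item 2.

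\textbf{Step 3 (stationarity).} This is the main obstacle, because each block step uses the gradient at an intermediate iterate $X^{(\ell,k)}$ rather than at $X^{(\ell)}$. The update rule gives $\nabla_{\Bx_k}\mathcal{J}_i(X^{(\ell,k)})=-(\Bx_k^{(\ell)}-\Bx_k^{(\ell-1)})/\eta_k$. The iterates $X^{(\ell,k)}$ and $X^{(\ell)}$ differ only in blocks $0,\dots,k$, and only by the step increments of sweep $\ell$. Lipschitz continuity of $\nabla\mathcal{J}_i$ from (d) therefore gives
\begin{equation*}
\|\nabla_{\Bx_k}\mathcal{J}_i(X^{(\ell)})\| \le \tfrac{1}{\eta_k}\|\Delta\Bx_k^{(\ell)}\| + L_\nabla\sum_{j\le k}\|\Delta\Bx_j^{(\ell)}\| \longrightarrow 0 .
\end{equation*}
Now take a subsequence $X^{(\ell_m)}\to\bar X$. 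Continuity of $\nabla\mathcal{J}_i$ gives $\nabla\mathcal{J}_i(\bar X)=0$, which is item 3.

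If (d) holds only locally, first restrict to a compact neighbourhood of $\bar X$. Along the subsequence, both $X^{(\ell_m)}$ and the intermediate iterates $X^{(\ell_m,k)}$ eventually lie in this neighbourhood, because the step sizes vanish by item 2. This is where I would check most carefully that the assumptions cover the nonconvex $\log\|\Bx_0\|$ term.
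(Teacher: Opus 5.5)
Your Steps 1 and 2 follow the paper's argument. The paper applies the descent lemma to each block function with the other blocks frozen, telescopes over the sweep $k=K,\dots,0$, and sums over $\ell$ using boundedness below. It keeps the constant $c_k=1/\eta_k-L_k/2$; your $1/(2\eta_k)$ is a valid lower bound for $c_k$ when $\eta_k\le 1/L_k$.

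In Step 3 you use more than the paper assumes. Hypothesis (d), a Lipschitz bound on the \emph{full} gradient, is not part of Assumption~\ref{ass:gs_convergence}. That assumption only requires $\mathcal{J}_i\in C^1$ and $\nabla_{\Bx_k}\mathcal{J}_i$ Lipschitz in its own block $\Bx_k$; it says nothing about how $\nabla_{\Bx_k}\mathcal{J}_i$ depends on the other blocks. The paper obtains item 3 from continuity alone. Along $X^{(\ell_j)}\to\bar X$, the intermediate iterates $X^{(\ell_j,k)}$ also converge to $\bar X$, since they differ from $X^{(\ell_j)}$ only in blocks $0,\dots,k$, by increments that vanish by item 2. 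Since $\nabla_{\Bx_k}\mathcal{J}_i(X^{(\ell_j,k)})=(\Bx_k^{(\ell_j-1)}-\Bx_k^{(\ell_j)})/\eta_k\to 0$, continuity of $\nabla\mathcal{J}_i$ at $\bar X$ gives $\nabla_{\Bx_k}\mathcal{J}_i(\bar X)=0$.

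Your last paragraph already observes that the intermediate iterates approach $\bar X$, but then still asks for a local Lipschitz constant. Passing to the limit directly makes (d) unnecessary. What the extra hypothesis buys is a stronger conclusion. Your bound gives $\nabla\mathcal{J}_i(X^{(\ell)})\to 0$ along the whole sequence, even when there is no accumulation point. After squaring it and using item 2 you even get $\sum_\ell\|\nabla\mathcal{J}_i(X^{(\ell)})\|_2^2<\infty$, hence $\min_{\ell\le T}\|\nabla\mathcal{J}_i(X^{(\ell)})\|_2=O(T^{-1/2})$.

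Your localisation to a neighbourhood of the sublevel set is more careful than the paper. The paper posits global $C^1$ regularity and global constants $L_k$ on $\R^{n(K+1)}$, which the $\log\|\Bx_0\|$ term of the radial prior does not satisfy near $\Bx_0=0$. If you keep the localisation, note that the descent lemma needs the Lipschitz bound on the whole segment between $\Bx_k^{(\ell-1)}$ and $\Bx_k^{(\ell)}$, not just at its endpoints. So ``induction keeps all iterates in the sublevel set'' needs a first-exit argument. Suppose the segment left the closed sublevel set at some first time. Because the neighbourhood is open, the descent bound would still hold slightly beyond that time. Because $\eta_k\le 1/L_k<2/L_k$, it would keep $\mathcal{J}_i$ at or below the level there, which is a contradiction.
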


\begin{remark}[Inexact and Jacobian-free updates]
\label{rem:inexact_updates}
Replacing $\nabla_{\Bx_k}\mathcal{J}_i$ in \Cref{eq:gs_block_update} by an approximation removes the guarantee of descent for \Cref{eq:traj_loss}. 
However, Proposition~\ref{prop:gs_stationarity} still holds if $\eta_k$ is chosen via an Armijo backtracking line search that enforces a uniform sufficient decrease in the \emph{true} objective $\mathcal{J}_i$ \cite{nocedal2006numerical,wright1999numerical}.
More generally, these inexact Gauss-Seidel steps can be interpreted as block updates on a local upper model (first-order surrogate plus a quadratic term), similar to majorization and proximal methods, for which standard sufficient-decrease arguments imply that bounded iterates converge to stationary points \cite{parikh2014proximal,nocedal2006numerical,wright1999numerical}. We empirically validate this in Figure~\ref{fig:gradient_comparison}.
\end{remark}

\begin{figure*}[t]
    \centering
    \begin{subfigure}[b]{0.48\textwidth}
        \centering
        \includegraphics[width=1.0\textwidth]{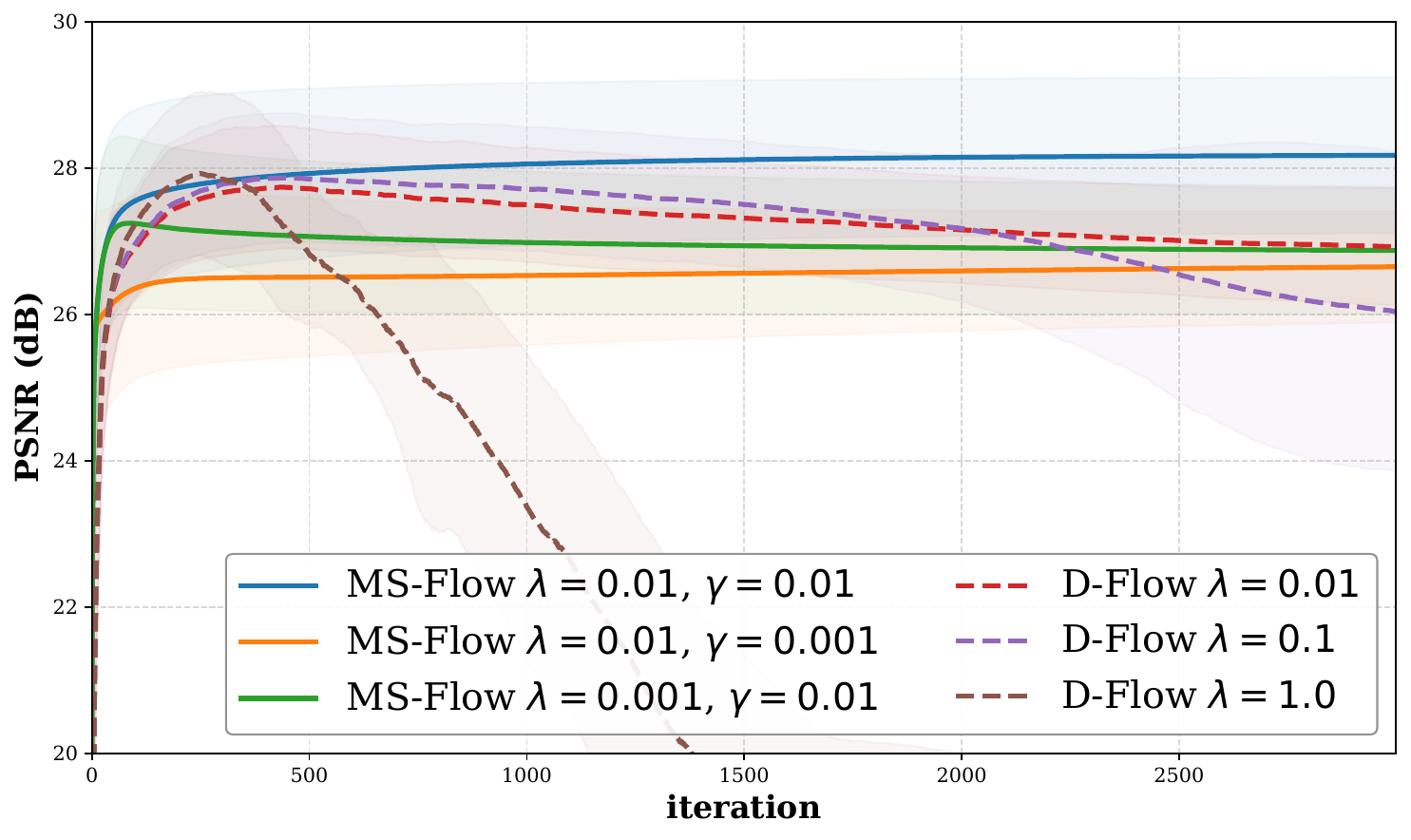}
        \caption{Changing the regularization terms.}
        \label{fig:medmnist_innerit}
    \end{subfigure}
    \hfill
    \begin{subfigure}[b]{0.48\textwidth}
        \centering
        \includegraphics[width=1.0\textwidth]{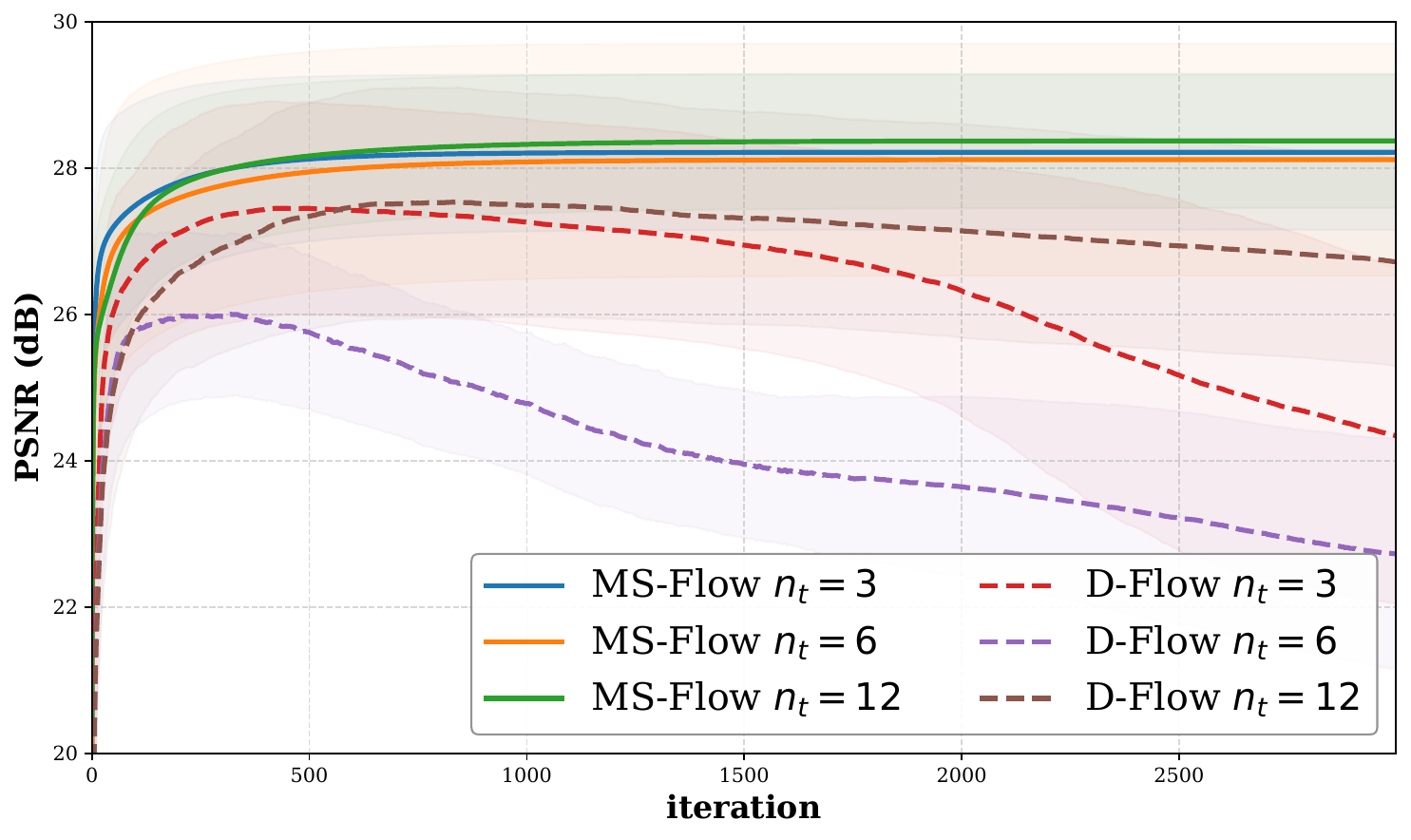}
        \caption{Changing the time discretization $n_t$.}
        \label{fig:medmnist_nts}
    \end{subfigure}
    \caption{Evaluation on the OrganCMNIST for sparse-angle CT. Left: The effect of regularization terms. Right: Comparison of MS-Flow with D-Flow across temporal resolutions. Shaded areas represent the standard deviation over the 10 images.}
    \label{fig:medmnist_psnr}
    \vspace{-0.3cm}
\end{figure*}

\section{Experiments}
\label{sec:experiments}
In this section, we evaluate the performance of our MS-Flow across a wide range of applications, including computed tomography, deblurring, inpainting, and super-resolution. We consider various baselines described in each experiment. Our goal is to address the following  questions:
\begin{enumerate}[nosep,leftmargin=2.5em]
    \item[\textbf{(Q1)}] How stable is MS-Flow, and how does it compare with existing techniques? 
    \item[\textbf{(Q2)}] How efficient is MS-Flow? 
    \item[\textbf{(Q3)}] What is the downstream performance of MS-Flow compared with leading methods for solving inverse problems with flow-based models?
\end{enumerate}
Throughout all experiments, we optimize the objective defined in \Cref{eq:augmented_LSO}.
We provide additional information on our experimental settings in \Cref{app:experimental_settings}. The code will be publicly available at: \url{https://github.com/alexdenker/MS-Flow}
%\footnote{The code will be publicly availabe at \url{https://github.com/alexdenker/MS-Flow}}

\begin{figure}[t]
    \centering
    \includegraphics[width=0.75\linewidth]{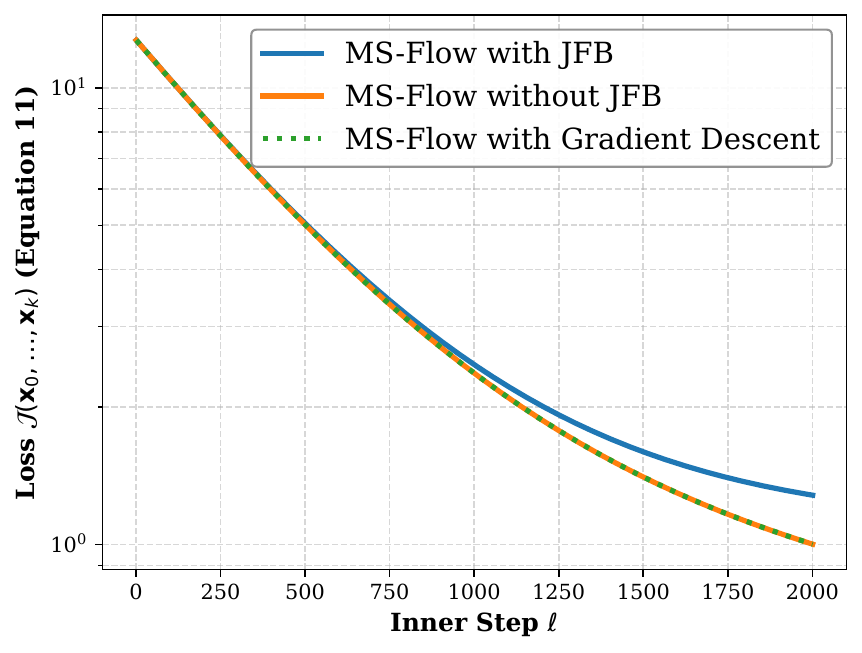}
    \caption{Convergence of the trajectory loss in \Cref{eq:traj_loss} for coordinate descent (with and without the Jacobian-free approximation) and full gradient descent.}
    \label{fig:gradient_comparison}
    \vspace{-0.6cm}
\end{figure}

\begin{table}[t]
\centering
\caption{Average computation time (seconds per image) over 500 iterations for different temporal discretizations ($n_t$), averaged over 5 images and 10 runs per image.}
\label{tab:comp_time}
\resizebox{0.45\textwidth}{!}{
\begin{tabular}{lccc}
\toprule
Method & $n_t = 3$ & $n_t = 6$ & $n_t = 12$ \\
\midrule
D-Flow & 7.43 & 17.06 & 175.57 \\ \noalign{\vskip 0.25ex}
\hdashline
\noalign{\vskip 0.25ex}
MS-Flow w/ JFB ($L=1$) & 2.44 & 2.63 &  4.44\\
MS-Flow w/ JFB ($L=10$) & 18.65 &  22.00& 40.10 \\ \noalign{\vskip 0.25ex}
\hdashline
\noalign{\vskip 0.25ex}
MS-Flow w/o JFB ($L=1$) & 9.30&  19.61 & 41.05 \\
MS-Flow w/o JFB ($L=10$) & 86.65 &  190.70 & 405.47 \\ \noalign{\vskip 0.25ex}
\hdashline
\noalign{\vskip 0.25ex}
MS-Flow w/ GD ($L=1$) & 4.35 & 5.36 & 9.42 \\
MS-Flow w/ GD ($L=10$) &  38.64 &  48.23 &  89.09\\
\bottomrule
\end{tabular}}
\vspace{-0.4cm}
\end{table}

\subsection{Empirical Convergence and Efficiency Analysis}
We first validate the convergence properties of our optimization scheme and demonstrate the robustness and computational efficiency of MS-Flow.

\textbf{Setup.} We consider sparse-angle tomography on the OrganCMNIST dataset \cite{medmnistv1,medmnistv2} to study algorithmic design choices and to compare with D-Flow.  
The forward operator is the Radon transform with 18 equidistant angles across $[0,\pi)$, adding random Gaussian noise with a $\sigma_{\text{noise}}=0.01$. The training is implemented using the Flow Matching library \cite{lipman2022flow}.
We use this dataset to study both \textbf{(Q1)} and \textbf{(Q2)} and provide ablations.

\begin{figure*}[th]
\centering
\begin{subfigure}{.14\textwidth}
\includegraphics[width=1.0\linewidth]{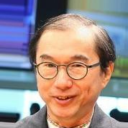}%
\end{subfigure}%
\hfill
\begin{subfigure}{.14\textwidth}
\includegraphics[width=1.0\linewidth]{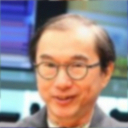}%
\end{subfigure}%
\hfill
\begin{subfigure}{.14\textwidth}
\includegraphics[width=1.0\linewidth]{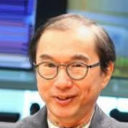}%
\end{subfigure}%
\hfill
\begin{subfigure}{.14\textwidth}
\includegraphics[width=1.0\linewidth]{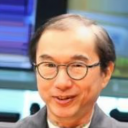}%
\end{subfigure}%
\hfill
\begin{subfigure}{.14\textwidth}
\includegraphics[width=1.0\linewidth]{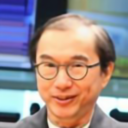}%
\end{subfigure}%
\hfill
\begin{subfigure}{.14\textwidth}
\includegraphics[width=1.0\linewidth]{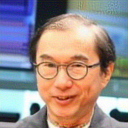}%
\end{subfigure}%
\hfill
\begin{subfigure}{.14\textwidth}
\includegraphics[width=1.0\linewidth]{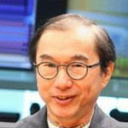}%
\end{subfigure}%

\begin{subfigure}{.14\textwidth}
\includegraphics[width=1.0\linewidth]{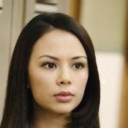}%
\end{subfigure}%
\hfill
\begin{subfigure}{.14\textwidth}
\includegraphics[width=1.0\linewidth]{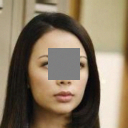}%
\end{subfigure}%
\hfill
\begin{subfigure}{.14\textwidth}
\includegraphics[width=1.0\linewidth]{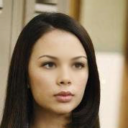}%
\end{subfigure}%
\hfill 
\begin{subfigure}{.14\textwidth}
\includegraphics[width=1.0\linewidth]{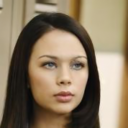}%
\end{subfigure}%
\hfill
\begin{subfigure}{.14\textwidth}
\includegraphics[width=1.0\linewidth]{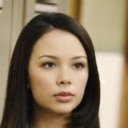}%
\end{subfigure}%
\hfill
\begin{subfigure}{.14\textwidth}
\includegraphics[width=1.0\linewidth]{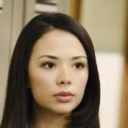}%
\end{subfigure}%
\hfill
\begin{subfigure}{.14\textwidth}
\includegraphics[width=1.0\linewidth]{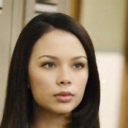}%
\end{subfigure}%

\begin{subfigure}{.14\textwidth}
\includegraphics[width=1.0\linewidth]{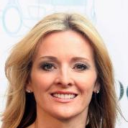}%
\vspace{-2pt}
\captionsetup{justification=centering}
\caption*{\small Ground truth}
\end{subfigure}%
\hfill
\begin{subfigure}{.14\textwidth}
\includegraphics[width=1.0\linewidth]{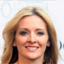}%
\vspace{-2pt}
\captionsetup{justification=centering}
\caption*{\small Observed Data}
\end{subfigure}%
\hfill
\begin{subfigure}{.14\textwidth}
\includegraphics[width=1.0\linewidth]{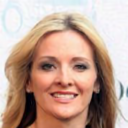}%
\vspace{-2pt}
\captionsetup{justification=centering}
 \caption*{\small Flow-Prior}
\end{subfigure}%
\hfill
\begin{subfigure}{.14\textwidth}
\includegraphics[width=1.0\linewidth]{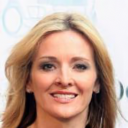}%
\vspace{-2pt}
\captionsetup{justification=centering}
 \caption*{\small OT-ODE}
\end{subfigure}%
\hfill
\begin{subfigure}{.14\textwidth}
\includegraphics[width=1.0\linewidth]{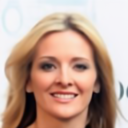}%
\vspace{-2pt}
\captionsetup{justification=centering}
 \caption*{\small PnP-Flow}
\end{subfigure}%
\hfill
\begin{subfigure}{.14\textwidth}
\includegraphics[width=1.0\linewidth]{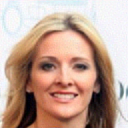}%
\vspace{-2pt}
\captionsetup{justification=centering}
 \caption*{\small D-Flow}
\end{subfigure}%
\hfill
\begin{subfigure}{.14\textwidth}
\includegraphics[width=1.0\linewidth]{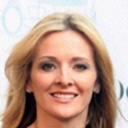}%
\vspace{-2pt}
\captionsetup{justification=centering}
 \caption*{\small MS-Flow (Ours)}
\end{subfigure}%
\caption{Comparison of the reconstructions for the three image recovery tasks on CelebA. First row: Gaussian deblurring. Second row: Inpainting. Third row: Super-resolution.} \label{fig:celeba_images_main}
\end{figure*}

\begin{table*}[t]
    \centering
    \caption{Quantitative comparison of different reconstruction methods across three image recovery tasks. The best results per task and metric (PSNR $\uparrow$, SSIM $\uparrow$) are highlighted in \textbf{bold}, and the second-best are \underline{underlined}.}
    \label{tab:restoration_results}
    \resizebox{\textwidth}{!}{%
        \begin{tabular}{lcccccccccc}
            \toprule % \multirow{2}{*}{\textbf{Task}}
            Method & \multicolumn{2}{c}{{Flow-Prior}} & \multicolumn{2}{c}{{OT-ODE}} & \multicolumn{2}{c}{{PnP-Flow}} & \multicolumn{2}{c}{{D-Flow}} & \multicolumn{2}{c}{{MS-Flow (Ours)}} \\  
            \cmidrule(lr){2-3} \cmidrule(lr){4-5} \cmidrule(lr){6-7} \cmidrule(lr){8-9} \cmidrule(lr){10-11}
             Task $\downarrow$ / Metric $\rightarrow$ & PSNR ($\uparrow$) & SSIM ($\uparrow$) & PSNR ($\uparrow$) & SSIM ($\uparrow$) & PSNR ($\uparrow$) & SSIM ($\uparrow$) & PSNR ($\uparrow$) & SSIM ($\uparrow$) & PSNR ($\uparrow$) & SSIM ($\uparrow$) \\
            \midrule
            Gaussian Deblurring & 36.12 & 0.961 & \underline{37.51} & \textbf{0.969} & 36.19 & \underline{0.962} & 33.17 & 0.890 & \textbf{38.02} & 0.958 \\
            Box-Inpainting & 36.79 & 0.984 & 34.26 & 0.975 & 36.43 & 0.985 & 33.96 & 0.979 & 36.30 & 0.980 \\
            Super-Resolution & 34.47 & 0.946 & 35.21 & \underline{0.956} & 34.73  & 0.948  & 34.36 & 0.942 & \textbf{36.46} & \textbf{0.957} \\
            \bottomrule
        \end{tabular}%
    }
\end{table*}

\textbf{Empirical Convergence.} We provide an empirical validation for the convergence result in Proposition \ref{prop:gs_stationarity} in \Cref{fig:gradient_comparison}. In particular, we compare three methods for optimizing the trajectory term in \Cref{eq:traj_loss}: (i) Coordinate descent with the Jacobian-free approximation (CD with JFB); (ii) Coordinate descent with exact gradients (CD without JFB); and (iii) Gradient descent. We note that, in early iterations, the convergence is similar and only in later iterations we observe an advantage (albeit minor) of using exact gradients.
However, performance is dominated by the first few updates, where all three methods behave similarly. In all experiments, we use $L\le 10$, and we adopt {CD with JFB} as the default inner-loop optimizer for the remainder of the paper. This choice is supported by both theory and practice: in many alternating optimization settings, subproblems do not need to be solved exactly to ensure overall convergence, and a small number of gradient steps per subproblem is often sufficient to make consistent progress, and is computationally efficient \cite{eckstein1992douglas,boyd2011distributed}. We can show convergence for {CD with JFB} when a line search is employed for the step size, see Remark \ref{rem:inexact_updates}. Empirically, we observe that a constant step size suffices to obtain stability.

\textbf{Scaling and Robustness Analysis.} We also study the convergence for various choices of the number of discretization points $n_t$ (equal to the number of shooting points $K$) and the choice of the regularization strength $\lambda$ for the radial Gaussian prior in \Cref{eq:radial_gaussian_prior}. The results are reported in \Cref{fig:medmnist_psnr}, showing that MS-Flow is stable for a wide range of hyperparameters. While D-Flow can achieve a similar peak PSNR, we often see a performance deterioration over the iterations. 
% While MS-Flow uses the additional Tikhonov regularizer in \Cref{eq:splitting_data}, D-Flow only relies on the initial radial Gaussian prior. 
We further compare the computational time in \Cref{tab:comp_time} for different choices of $n_t$ and different numbers of inner coordinate descent iterations for MS-Flow ($L=1$ and $L=10$). We see that because of the end-to-end nature of D-Flow, its computational time increases drastically with the number of discretization points $n_t$. In contrast, our MS-Flow with the Jacobian-free approximation is the fastest method and scales gracefully with the number of discretization points $n_t$.

\subsection{Image Recovery Tasks}
\label{sec:image_restoration}
%\textcolor{red}{1 sentence on what we show here}
Next, we demonstrate that MS-Flow is competitive with other recently proposed flow-based inverse problem solvers.

\textbf{Setup.} In this set of experiments, we consider three common image recovery tasks on the CelebA dataset \cite{yang2015facial}. 
Specifically, we evaluate on: (i) image deblurring with a Gaussian blur with intensity $\sigma=1$ and kernel size 61; (ii) super-resolution with bicubic interpolation with scale factor 2; and (iii) inpainting with a central box mask of size $32\times32$ pixels.
In all cases we add Gaussian noise with $\sigma_{\text{noise}}=0.01$.
We use standard imaging metrics, namely PSNR and SSIM \cite{wang2004image}, and report the values obtained for the reconstruction over 50 images from CelebA. 
This setting both addresses \textbf{(Q3)} as well as provides a direct comparison with several flow-based inverse problem solvers, including Flow-Prior \cite{zhang2024flow}, OT-ODE \cite{pokle2024trainingfree}, PnP-Flow \cite{martin2025pnpflow}, and D-Flow. 

\textbf{Results.} Experimental results are summarized in \Cref{tab:restoration_results} and reconstructions are provided in Figure~\ref{fig:celeba_images_main}. For both Gaussian deblurring and super-resolution, MS-Flow achieves the highest PSNR, with a small drop in SSIM. This can be explained by our data-consistency step in \eqref{eq:splitting_data}, which includes a regularizer penalizing deviations from the terminal shooting point $\Bx_K$. The deviation is measured with the Euclidean norm, which also promotes smoothness, resulting in a higher (better) PSNR. 
For Gaussian deblurring and super-resolution, we show the initialization of the shooting points and the final converged trajectory in Figures~\ref{fig:trajectory} and \ref{fig:trajectory_superresolution}. 

\section{Latent Flow Models}
\label{sec:latent_flow_models_exp}
We now demonstrate that in addition to standard flow-matching networks, our MS-Flow scales to modern large-scale latent flow models by evaluating it on Stable Diffusion 3.5 \cite{esser2024scaling}.

\textbf{Setup.} 
We compare our MS-Flow with FlowDPS \cite{Kim_2025_ICCV}, a flow-based inverse problem solver that avoids backpropagation through the flow and therefore scales to large latent flow architectures. For MS-Flow, we again consider the Jacobian-free gradient approximation for the trajectory update. 
We consider Gaussian deblurring with blur standard deviation $\sigma = 3.0$. 

In latent flow models, the flow operates in latent space while observations are defined in pixel space. Let $D:\mathcal{Z}\rightarrow\mathcal{X}$ denote the decoder mapping latent variables to images, with $\mathcal{Z}$ as the latent space and $\mathcal{X}$ as the pixel space.
In this setting the data-consistency update from \Cref{eq:splitting_data} becomes
\begin{align*}
\Bz_*^{i+1} = \argmin_{\Bz_*} \frac{1}{2}\|\mathbf{A}D(\Bz_*) - \mathbf{y}\|_2^2 + \frac{\alpha}{2}\|\Bz - \Bz_K^{i+1}\|_2^2.
\end{align*}
All optimization variables are defined in latent space and are denoted by $\Bz$. The final image reconstruction is obtained as $\Bx_* = D(\Bz_*)$.
As the decoder $D$ is nonlinear, the objective no longer admits a closed-form solution. We therefore solve it using a small number of gradient descent steps with Adam \cite{kingma2014adam}. We refer to Appendix \ref{app:latent_flow} for more details.

\textbf{Results.} Our results are reported in \Cref{tab:restoration_results_stablediff} for different noise levels, $\sigma_{\text{noise}}=0.01$ and $\sigma_{\text{noise}}=0.1$. We report both the mean PSNR and mean SSIM, averaged over the first $10$ images of the FFHQ dataset \cite{karras2019style}.
Across both settings, MS-Flow consistently outperforms FlowDPS in terms of PSNR and SSIM. Qualitative reconstructions are shown in \Cref{fig:stable_diffusion_deblurring}. These results further show the contribution of MS-Flow as an efficient and effective approach for utilizing flow-based models for solving inverse problems.

\begin{figure}[t]
    \centering
    \begin{subfigure}{0.24\linewidth}
        \centering
        \includegraphics[width=\linewidth]{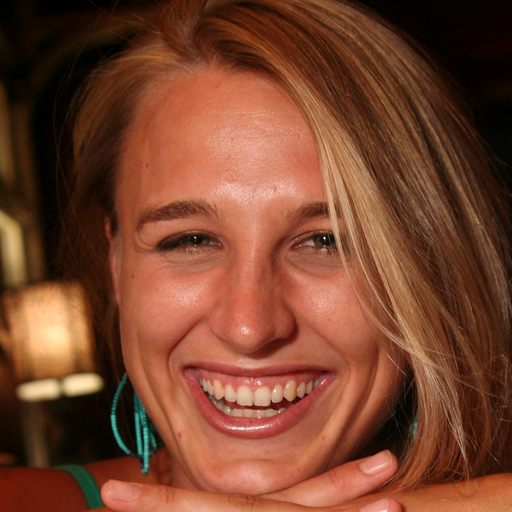}
    \end{subfigure}\hfill
    \begin{subfigure}{0.24\linewidth}
        \centering
        \includegraphics[width=\linewidth]{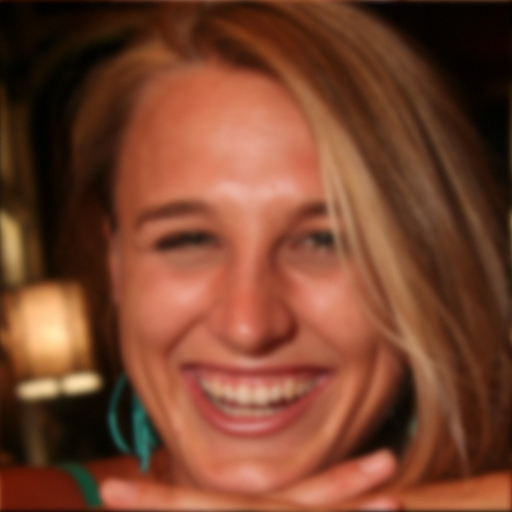}
    \end{subfigure}\hfill
    \begin{subfigure}{0.24\linewidth}
        \centering
        \includegraphics[width=\linewidth]{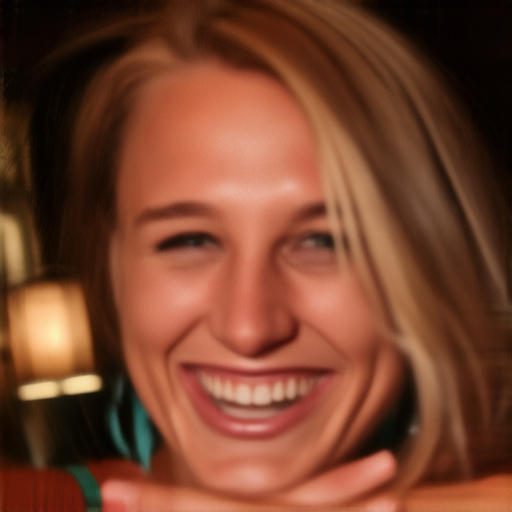}
    \end{subfigure}\hfill
    \begin{subfigure}{0.24\linewidth}
        \centering
        \includegraphics[width=\linewidth]{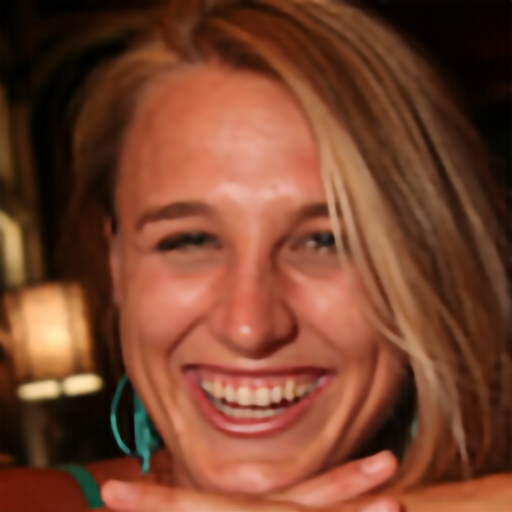}
    \end{subfigure}
    
    \begin{subfigure}{0.24\linewidth}
        \centering
        \includegraphics[width=\linewidth]{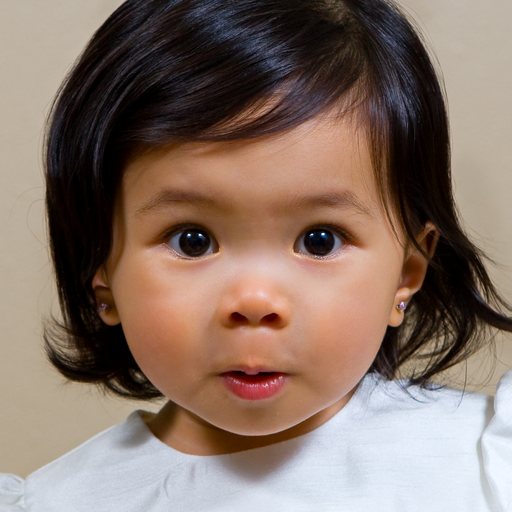}
        \caption*{ Ground truth}
    \end{subfigure}\hfill
    \begin{subfigure}{0.24\linewidth}
        \centering
        \includegraphics[width=\linewidth]{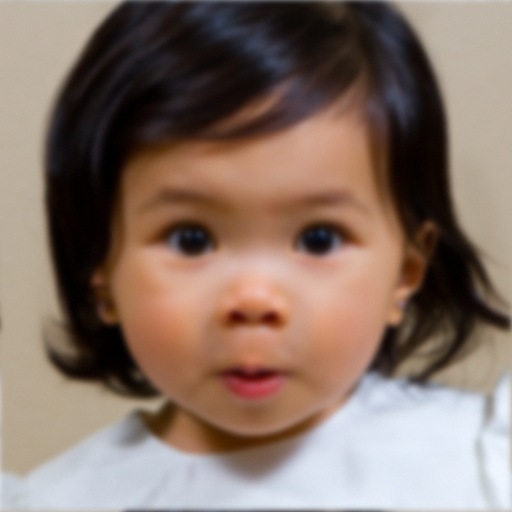}
        \caption*{Observed Data}
    \end{subfigure}\hfill
    \begin{subfigure}{0.24\linewidth}
        \centering
        \includegraphics[width=\linewidth]{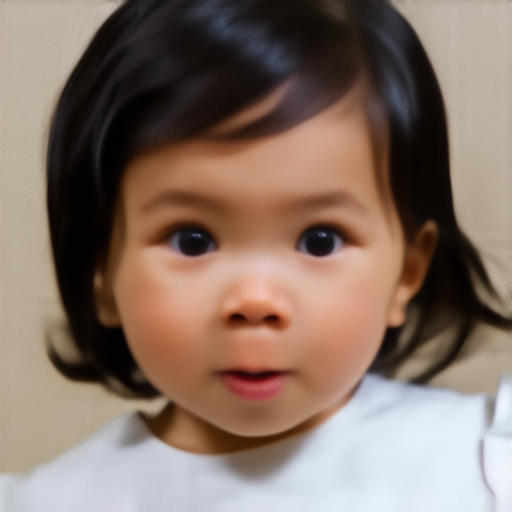}
        \caption*{FlowDPS}
    \end{subfigure}\hfill
    \begin{subfigure}{0.24\linewidth}
        \centering
        \includegraphics[width=\linewidth]{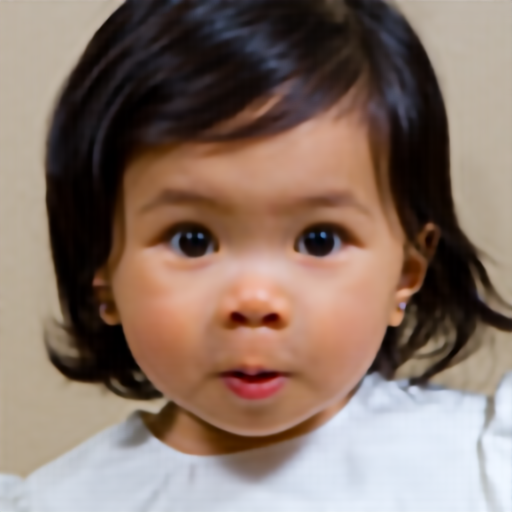}
        \caption*{MS-Flow}
    \end{subfigure}
    \caption{Reconstruction for Gaussian deblurring on FFHQ with $\sigma_\text{noise}=0.01$.}
    \label{fig:stable_diffusion_deblurring}
\end{figure}

\begin{table}[t]
\centering
\caption{Comparison for Gaussian deblurring on FFHQ, with FlowDPS, and the degraded image as a comparison.}
\label{tab:restoration_results_stablediff}
\setlength{\tabcolsep}{3.5pt}
\resizebox{0.47\textwidth}{!}{%
\begin{tabular}{lcccc}
\toprule
& \multicolumn{2}{c}{$\sigma_\text{noise} = 0.01$} & \multicolumn{2}{c}{$\sigma_\text{noise} = 0.1$} \\
\cmidrule(lr){2-3}\cmidrule(lr){4-5}
Method & PSNR ($\uparrow$) & SSIM ($\uparrow$) & PSNR ($\uparrow$) & SSIM ($\uparrow$) \\
\midrule
Degraded Image & 26.66 & 0.767  & 23.29  & 0.334 \\
\noalign{\vskip 0.25ex}
\hdashline
\noalign{\vskip 0.25ex}
{FlowDPS} & 28.16 & 0.756 & 27.34 & 0.710 \\
{MS-Flow (Ours)} & \textbf{31.23} & \textbf{0.841} & \textbf{29.91} & \textbf{0.786} \\
\bottomrule
\end{tabular}}
\vspace{-0.4cm}
\end{table}

\section{Discussion and Conclusions}
We introduce MS-Flow, a multiple-shooting framework for LSO with flow-based models. By representing the trajectory as a sequence of intermediate shooting states and enforcing the flow dynamics locally via trajectory-matching constraints, MS-Flow overcomes two key limitations of single-shooting approaches: high memory consumption and poor conditioning. Empirically, MS-Flow achieves competitive or improved reconstruction performance across a range of imaging inverse problems. It maintains constant memory usage regardless of the number of ODE discretization steps, making it significantly more efficient than common existing techniques. This positions MS-Flow as a well-suited approach for high-resolution problems and fine-grained temporal discretizations, such as image restoration for FFHQ with Stable Diffusion 3.5, where single-shooting approaches become impractical.

Relaxing the global ODE constraint in MS-Flow naturally introduces a small set of design choices, most notably the penalty weights and the number of shooting points. In practice, we found that simple fixed penalties work reliably across tasks, indicating that MS-Flow is robust to these settings. Moreover, for inverse problems, which are at the focus of this work, the intermediate shooting states act as auxiliary variables, so performance is governed primarily by the terminal state rather than full-trajectory fidelity. Building on this, incorporating adaptive penalty updates, as in splitting-based optimization methods \cite{boyd2011distributed}, in this context, is an interesting future work direction. 

A promising next step is to extend MS-Flow to SDE-based generative models, including diffusion and score-based methods \cite{song2020score}. By introducing appropriate stochastic consistency constraints across segments, MS-Flow could provide a practical route to memory-efficient conditional sampling in these models.

\clearpage
\section*{Acknowledgments}
AD acknowledges support from the EPSRC (EP/V026259/1). ZK acknowledges support from the EPSRC (EP/X010740/1).

\section*{Impact Statement}
This work aims to enhance controllability in flow-based generative models. While improved control could potentially be misused to produce misleading or harmful content, including deepfakes, the method does not expand the underlying generative capacity beyond existing models.

\bibliography{bibliography}

\begin{thebibliography}{57}
\providecommand{\natexlab}[1]{#1}
\providecommand{\url}[1]{\texttt{#1}}
\expandafter\ifx\csname urlstyle\endcsname\relax
  \providecommand{\doi}[1]{doi: #1}\else
  \providecommand{\doi}{doi: \begingroup \urlstyle{rm}\Url}\fi

\bibitem[Ardizzone et~al.(2019)Ardizzone, Kruse, Rother, and Köthe]{ardizzone2019analyzing}
Ardizzone, L., Kruse, J., Rother, C., and Köthe, U.
\newblock Analyzing inverse problems with invertible neural networks.
\newblock In \emph{International Conference on Learning Representations}, 2019.

\bibitem[Asim et~al.(2020)Asim, Daniels, Leong, Ahmed, and Hand]{asim2020invertible}
Asim, M., Daniels, M., Leong, O., Ahmed, A., and Hand, P.
\newblock Invertible generative models for inverse problems: mitigating representation error and dataset bias.
\newblock In \emph{International conference on machine learning}, pp.\  399--409. PMLR, 2020.

\bibitem[Ben-Hamu et~al.(2024)Ben-Hamu, Puny, Gat, Karrer, Singer, and Lipman]{ben2024d}
Ben-Hamu, H., Puny, O., Gat, I., Karrer, B., Singer, U., and Lipman, Y.
\newblock {D-Flow}: Differentiating through flows for controlled generation.
\newblock In \emph{Forty-first International Conference on Machine Learning}, 2024.

\bibitem[Betts(2010)]{betts2010practical}
Betts, J.~T.
\newblock \emph{Practical methods for optimal control and estimation using nonlinear programming}.
\newblock SIAM, 2010.

\bibitem[Bishop \& Nasrabadi(2006)Bishop and Nasrabadi]{bishop2006pattern}
Bishop, C.~M. and Nasrabadi, N.~M.
\newblock \emph{Pattern recognition and machine learning}, volume~4.
\newblock Springer, 2006.

\bibitem[Bock \& Plitt(1984)Bock and Plitt]{bock1984multiple}
Bock, H.~G. and Plitt, K.-J.
\newblock A multiple shooting algorithm for direct solution of optimal control problems.
\newblock \emph{IFAC Proceedings Volumes}, 17\penalty0 (2):\penalty0 1603--1608, 1984.

\bibitem[Bora et~al.(2017)Bora, Jalal, Price, and Dimakis]{bora2017compressed}
Bora, A., Jalal, A., Price, E., and Dimakis, A.~G.
\newblock Compressed sensing using generative models.
\newblock In Precup, D. and Teh, Y.~W. (eds.), \emph{Proceedings of the 34th International Conference on Machine Learning}, volume~70 of \emph{Proceedings of Machine Learning Research}, pp.\  537--546. PMLR, 06--11 Aug 2017.

\bibitem[Boyd et~al.(2011)Boyd, Parikh, Chu, Peleato, Eckstein, et~al.]{boyd2011distributed}
Boyd, S., Parikh, N., Chu, E., Peleato, B., Eckstein, J., et~al.
\newblock Distributed optimization and statistical learning via the alternating direction method of multipliers.
\newblock \emph{Foundations and Trends{\textregistered} in Machine learning}, 3\penalty0 (1):\penalty0 1--122, 2011.

\bibitem[Chen et~al.(2018)Chen, Rubanova, Bettencourt, and Duvenaud]{chen2018neural}
Chen, R.~T., Rubanova, Y., Bettencourt, J., and Duvenaud, D.~K.
\newblock Neural ordinary differential equations.
\newblock \emph{Advances in neural information processing systems}, 31, 2018.

\bibitem[Chen et~al.(2016)Chen, Xu, Zhang, and Guestrin]{chen2016training}
Chen, T., Xu, B., Zhang, C., and Guestrin, C.
\newblock Training deep nets with sublinear memory cost.
\newblock \emph{arXiv preprint arXiv:1604.06174}, 2016.

\bibitem[Chihaoui et~al.(2024)Chihaoui, Lemkhenter, and Favaro]{chihaoui2024blind}
Chihaoui, H., Lemkhenter, A., and Favaro, P.
\newblock Blind image restoration via fast diffusion inversion.
\newblock \emph{Advances in Neural Information Processing Systems}, 37:\penalty0 34513--34532, 2024.

\bibitem[Chung et~al.(2023)Chung, Kim, Mccann, Klasky, and Ye]{chung2022diffusion}
Chung, H., Kim, J., Mccann, M.~T., Klasky, M.~L., and Ye, J.~C.
\newblock Diffusion posterior sampling for general noisy inverse problems.
\newblock In \emph{The Eleventh International Conference on Learning Representations}, 2023.

\bibitem[Dhariwal \& Nichol(2021)Dhariwal and Nichol]{dhariwal2021diffusion}
Dhariwal, P. and Nichol, A.
\newblock Diffusion models beat gans on image synthesis.
\newblock \emph{Advances in neural information processing systems}, 34:\penalty0 8780--8794, 2021.

\bibitem[Duff et~al.(2024)Duff, Campbell, and Ehrhardt]{duff2024regularising}
Duff, M., Campbell, N.~D., and Ehrhardt, M.~J.
\newblock Regularising inverse problems with generative machine learning models.
\newblock \emph{Journal of Mathematical Imaging and Vision}, 66\penalty0 (1):\penalty0 37--56, 2024.

\bibitem[Eckstein \& Bertsekas(1992)Eckstein and Bertsekas]{eckstein1992douglas}
Eckstein, J. and Bertsekas, D.~P.
\newblock On the douglas—rachford splitting method and the proximal point algorithm for maximal monotone operators.
\newblock \emph{Mathematical programming}, 55\penalty0 (1):\penalty0 293--318, 1992.

\bibitem[Eliasof et~al.(2023)Eliasof, Haber, and Treister]{eliasof2023drip}
Eliasof, M., Haber, E., and Treister, E.
\newblock Drip: deep regularizers for inverse problems.
\newblock \emph{Inverse Problems}, 40\penalty0 (1):\penalty0 015006, 2023.

\bibitem[Engl et~al.(1996)Engl, Hanke, and Neubauer]{EnglHankeNeubauer1996}
Engl, H.~W., Hanke, M., and Neubauer, A.
\newblock \emph{Regularization of Inverse Problems}.
\newblock Mathematics and Its Applications. Springer, Dordrecht, 1996.

\bibitem[Esser et~al.(2024)Esser, Kulal, Blattmann, Entezari, M{\"u}ller, Saini, Levi, Lorenz, Sauer, Boesel, et~al.]{esser2024scaling}
Esser, P., Kulal, S., Blattmann, A., Entezari, R., M{\"u}ller, J., Saini, H., Levi, Y., Lorenz, D., Sauer, A., Boesel, F., et~al.
\newblock Scaling rectified flow transformers for high-resolution image synthesis.
\newblock In \emph{Forty-first international conference on machine learning}, 2024.

\bibitem[Farquhar et~al.(2020)Farquhar, Osborne, and Gal]{farquhar2020radial}
Farquhar, S., Osborne, M.~A., and Gal, Y.
\newblock Radial bayesian neural networks: Beyond discrete support in large-scale bayesian deep learning.
\newblock In \emph{International Conference on Artificial Intelligence and Statistics}, pp.\  1352--1362. PMLR, 2020.

\bibitem[Goodfellow et~al.(2014)Goodfellow, Pouget-Abadie, Mirza, Xu, Warde-Farley, Ozair, Courville, and Bengio]{goodfellow2014generative}
Goodfellow, I.~J., Pouget-Abadie, J., Mirza, M., Xu, B., Warde-Farley, D., Ozair, S., Courville, A., and Bengio, Y.
\newblock Generative adversarial nets.
\newblock \emph{Advances in neural information processing systems}, 27, 2014.

\bibitem[Grathwohl et~al.(2019)Grathwohl, Chen, Bettencourt, Sutskever, and Duvenaud]{grathwohl2018ffjord}
Grathwohl, W., Chen, R. T.~Q., Bettencourt, J., Sutskever, I., and Duvenaud, D.
\newblock Ffjord: Free-form continuous dynamics for scalable reversible generative models.
\newblock \emph{International Conference on Learning Representations}, 2019.

\bibitem[Hairer et~al.(1993)Hairer, Wanner, and N{\o}rsett]{hairer1993solving}
Hairer, E., Wanner, G., and N{\o}rsett, S.~P.
\newblock \emph{Solving ordinary differential equations I: Nonstiff problems}.
\newblock Springer, 1993.

\bibitem[Hinze et~al.(2008)Hinze, Pinnau, Ulbrich, and Ulbrich]{hinze2008optimization}
Hinze, M., Pinnau, R., Ulbrich, M., and Ulbrich, S.
\newblock \emph{Optimization with PDE constraints}, volume~23.
\newblock Springer Science \& Business Media, 2008.

\bibitem[Ho \& Salimans(2022)Ho and Salimans]{ho2022classifier}
Ho, J. and Salimans, T.
\newblock Classifier-free diffusion guidance.
\newblock \emph{arXiv preprint arXiv:2207.12598}, 2022.

\bibitem[Karras et~al.(2019)Karras, Laine, and Aila]{karras2019style}
Karras, T., Laine, S., and Aila, T.
\newblock A style-based generator architecture for generative adversarial networks.
\newblock In \emph{Proceedings of the IEEE/CVF conference on computer vision and pattern recognition}, pp.\  4401--4410, 2019.

\bibitem[Kim et~al.(2025)Kim, Kim, and Ye]{Kim_2025_ICCV}
Kim, J., Kim, B.~S., and Ye, J.~C.
\newblock Flowdps : Flow-driven posterior sampling for inverse problems.
\newblock In \emph{Proceedings of the IEEE/CVF International Conference on Computer Vision (ICCV)}, pp.\  12328--12337, October 2025.

\bibitem[Kingma \& Ba(2015)Kingma and Ba]{kingma2014adam}
Kingma, D.~P. and Ba, J.
\newblock Adam: {A} method for stochastic optimization.
\newblock In \emph{International Conference on Learning Representations (ICLR) 2015,}, 2015.

\bibitem[Kingma \& Welling(2013)Kingma and Welling]{kingma2013auto}
Kingma, D.~P. and Welling, M.
\newblock Auto-encoding variational bayes.
\newblock \emph{arXiv preprint arXiv:1312.6114}, 2013.

\bibitem[Lions(1971)]{lions1971optimal}
Lions, J.~L.
\newblock \emph{Optimal control of systems governed by partial differential equations}, volume 170.
\newblock Springer, 1971.

\bibitem[Lipman et~al.(2022)Lipman, Chen, Ben-Hamu, Nickel, and Le]{lipman2022flow}
Lipman, Y., Chen, R.~T., Ben-Hamu, H., Nickel, M., and Le, M.
\newblock Flow matching for generative modeling.
\newblock \emph{arXiv preprint arXiv:2210.02747}, 2022.

\bibitem[Liu(2022)]{liu2022rectified}
Liu, Q.
\newblock Rectified flow: A marginal preserving approach to optimal transport.
\newblock \emph{arXiv preprint arXiv:2209.14577}, 2022.

\bibitem[Liu et~al.(2023)Liu, Wu, Zhang, Gong, Ping, and Liu]{liu2023flowgrad}
Liu, X., Wu, L., Zhang, S., Gong, C., Ping, W., and Liu, Q.
\newblock Flowgrad: Controlling the output of generative odes with gradients.
\newblock In \emph{Proceedings of the IEEE/CVF Conference on Computer Vision and Pattern Recognition}, pp.\  24335--24344, 2023.

\bibitem[Martin et~al.(2025)Martin, Gagneux, Hagemann, and Steidl]{martin2025pnpflow}
Martin, S., Gagneux, A., Hagemann, P., and Steidl, G.
\newblock Pnp-flow: Plug-and-play image restoration with flow matching.
\newblock In \emph{International Conference on Learning Representations (ICLR)}, 2025.

\bibitem[Massaroli et~al.(2021)Massaroli, Poli, Sonoda, Suzuki, Park, Yamashita, and Asama]{massaroli2021dmsl}
Massaroli, S., Poli, M., Sonoda, S., Suzuki, T., Park, J., Yamashita, A., and Asama, H.
\newblock Differentiable multiple shooting layers.
\newblock In Beygelzimer, A., Dauphin, Y., Liang, P., and Vaughan, J.~W. (eds.), \emph{Advances in Neural Information Processing Systems}, 2021.

\bibitem[Menon et~al.(2020)Menon, Damian, Hu, Ravi, and Rudin]{menon2020pulse}
Menon, S., Damian, A., Hu, S., Ravi, N., and Rudin, C.
\newblock Pulse: Self-supervised photo upsampling via latent space exploration of generative models.
\newblock In \emph{Proceedings of the {IEEE}/{CVF} conference on computer vision and pattern recognition}, pp.\  2437--2445, 2020.

\bibitem[Nikolova(2004)]{nikolova2004variational}
Nikolova, M.
\newblock A variational approach to remove outliers and impulse noise.
\newblock \emph{Journal of Mathematical Imaging and Vision}, 20\penalty0 (1):\penalty0 99--120, 2004.

\bibitem[Nocedal \& Wright(2006)Nocedal and Wright]{nocedal2006numerical}
Nocedal, J. and Wright, S.~J.
\newblock \emph{Numerical optimization}.
\newblock Springer, 2006.

\bibitem[Parikh et~al.(2014)Parikh, Boyd, et~al.]{parikh2014proximal}
Parikh, N., Boyd, S., et~al.
\newblock Proximal algorithms.
\newblock \emph{Foundations and trends{\textregistered} in Optimization}, 1\penalty0 (3):\penalty0 127--239, 2014.

\bibitem[Patel et~al.(2025)Patel, Wen, Metaxas, and Yang]{Patel_2025_ICCV}
Patel, M., Wen, S., Metaxas, D.~N., and Yang, Y.
\newblock Flowchef: Steering of rectified flow models for controlled generations.
\newblock In \emph{Proceedings of the IEEE/CVF International Conference on Computer Vision (ICCV)}, pp.\  15308--15318, October 2025.

\bibitem[Pokle et~al.(2024)Pokle, Muckley, Chen, and Karrer]{pokle2024trainingfree}
Pokle, A., Muckley, M.~J., Chen, R. T.~Q., and Karrer, B.
\newblock Training-free linear image inverses via flows.
\newblock \emph{Transactions on Machine Learning Research (TMLR)}, 2024.

\bibitem[Poole et~al.(2022)Poole, Jain, Barron, and Mildenhall]{poole2022dreamfusion}
Poole, B., Jain, A., Barron, J.~T., and Mildenhall, B.
\newblock Dreamfusion: Text-to-3d using 2d diffusion.
\newblock \emph{arXiv preprint arXiv:2209.14988}, 2022.

\bibitem[Rezende \& Mohamed(2015)Rezende and Mohamed]{rezende2015variational}
Rezende, D. and Mohamed, S.
\newblock Variational inference with normalizing flows.
\newblock In \emph{International conference on machine learning}, pp.\  1530--1538. PMLR, 2015.

\bibitem[Samuel et~al.(2023)Samuel, Ben-Ari, Darshan, Maron, and Chechik]{samuel2023norm}
Samuel, D., Ben-Ari, R., Darshan, N., Maron, H., and Chechik, G.
\newblock Norm-guided latent space exploration for text-to-image generation.
\newblock \emph{Advances in Neural Information Processing Systems}, 36:\penalty0 57863--57875, 2023.

\bibitem[Song et~al.(2020)Song, Meng, and Ermon]{song2020denoising}
Song, J., Meng, C., and Ermon, S.
\newblock Denoising diffusion implicit models.
\newblock \emph{arXiv preprint arXiv:2010.02502}, 2020.

\bibitem[Song et~al.(2021)Song, Sohl-Dickstein, Kingma, Kumar, Ermon, and Poole]{song2020score}
Song, Y., Sohl-Dickstein, J., Kingma, D.~P., Kumar, A., Ermon, S., and Poole, B.
\newblock Score-based generative modeling through stochastic differential equations.
\newblock In \emph{International Conference on Learning Representations}, 2021.

\bibitem[Tong et~al.(2023)Tong, Fatras, Malkin, Huguet, Zhang, Rector-Brooks, Wolf, and Bengio]{tong2023improving}
Tong, A., Fatras, K., Malkin, N., Huguet, G., Zhang, Y., Rector-Brooks, J., Wolf, G., and Bengio, Y.
\newblock Improving and generalizing flow-based generative models with minibatch optimal transport.
\newblock \emph{arXiv preprint arXiv:2302.00482}, 2023.

\bibitem[Turan \& J{\"a}schke(2021)Turan and J{\"a}schke]{turan2021msneuralode}
Turan, E.~M. and J{\"a}schke, J.
\newblock Training neural odes using multiple shooting: a benchmark on time series, 2021.

\bibitem[Wang et~al.(2024{\natexlab{a}})Wang, Zhang, Li, Wan, Chen, and Sun]{wang2024dmplug}
Wang, H., Zhang, X., Li, T., Wan, Y., Chen, T., and Sun, J.
\newblock Dmplug: A plug-in method for solving inverse problems with diffusion models.
\newblock \emph{Advances in Neural Information Processing Systems}, 37:\penalty0 117881--117916, 2024{\natexlab{a}}.

\bibitem[Wang et~al.(2024{\natexlab{b}})Wang, Cheng, Liao, Qu, and Liu]{wang2024ocflow}
Wang, L., Cheng, C., Liao, Y., Qu, Y., and Liu, G.
\newblock Training free guided flow matching with optimal control, 2024{\natexlab{b}}.
\newblock arXiv:2410.18070 (revised 2025).

\bibitem[Wang et~al.(2004)Wang, Bovik, Sheikh, and Simoncelli]{wang2004image}
Wang, Z., Bovik, A.~C., Sheikh, H.~R., and Simoncelli, E.~P.
\newblock Image quality assessment: from error visibility to structural similarity.
\newblock \emph{IEEE transactions on image processing}, 13\penalty0 (4):\penalty0 600--612, 2004.

\bibitem[Wirsching et~al.(2007)Wirsching, Ferreau, Bock, and Diehl]{wirsching2007online}
Wirsching, L., Ferreau, H.~J., Bock, H.~G., and Diehl, M.
\newblock An online active set strategy for fast adjoint based nonlinear model predictive control.
\newblock \emph{IFAC Proceedings Volumes}, 40\penalty0 (12):\penalty0 234--239, 2007.

\bibitem[Wright et~al.(1999)Wright, Nocedal, et~al.]{wright1999numerical}
Wright, S., Nocedal, J., et~al.
\newblock Numerical optimization.
\newblock \emph{Springer Science}, 35\penalty0 (67-68):\penalty0 7, 1999.

\bibitem[Yan et~al.(2025)Yan, Zhang, Meng, and Zhao]{yan2025fig}
Yan, Y., Zhang, Y., Meng, X., and Zhao, Z.
\newblock Fig: Flow with interpolant guidance for linear inverse problems.
\newblock In \emph{International Conference on Learning Representations (ICLR)}, 2025.

\bibitem[Yang et~al.(2021)Yang, Shi, and Ni]{medmnistv1}
Yang, J., Shi, R., and Ni, B.
\newblock Medmnist classification decathlon: A lightweight automl benchmark for medical image analysis.
\newblock In \emph{IEEE 18th International Symposium on Biomedical Imaging (ISBI)}, pp.\  191--195, 2021.

\bibitem[Yang et~al.(2023)Yang, Shi, Wei, Liu, Zhao, Ke, Pfister, and Ni]{medmnistv2}
Yang, J., Shi, R., Wei, D., Liu, Z., Zhao, L., Ke, B., Pfister, H., and Ni, B.
\newblock Medmnist v2-a large-scale lightweight benchmark for 2d and 3d biomedical image classification.
\newblock \emph{Scientific Data}, 10\penalty0 (1):\penalty0 41, 2023.

\bibitem[Yang et~al.(2015)Yang, Luo, Loy, and Tang]{yang2015facial}
Yang, S., Luo, P., Loy, C.-C., and Tang, X.
\newblock From facial parts responses to face detection: A deep learning approach.
\newblock In \emph{Proceedings of the IEEE international conference on computer vision}, pp.\  3676--3684, 2015.

\bibitem[Zhang et~al.(2024)Zhang, Yu, Zhu, Chang, Gao, Wu, and Leong]{zhang2024flow}
Zhang, Y., Yu, P., Zhu, Y., Chang, Y., Gao, F., Wu, Y.~N., and Leong, O.
\newblock Flow priors for linear inverse problems via iterative corrupted trajectory matching.
\newblock \emph{Advances in Neural Information Processing Systems}, 37:\penalty0 57389--57417, 2024.

\end{thebibliography}
\bibliographystyle{icml2026}

%%%%%%%%%%%%%%%%%%%%%%%%%%%%%%%%%%%%%%%%%%%%%%%%%%%%%%%%%%%%%%%%%%%%%%%%%%%%%%%
%%%%%%%%%%%%%%%%%%%%%%%%%%%%%%%%%%%%%%%%%%%%%%%%%%%%%%%%%%%%%%%%%%%%%%%%%%%%%%%
% APPENDIX
%%%%%%%%%%%%%%%%%%%%%%%%%%%%%%%%%%%%%%%%%%%%%%%%%%%%%%%%%%%%%%%%%%%%%%%%%%%%%%%
%%%%%%%%%%%%%%%%%%%%%%%%%%%%%%%%%%%%%%%%%%%%%%%%%%%%%%%%%%%%%%%%%%%%%%%%%%%%%%%
\newpage
\appendix
\onecolumn

%%%%%%%%%%%%%%%%%%%%%%%%%%%%%%%%%%%%%%%%%%%%%%%%%%%%%%%%%%%%%%%%%%%%%%%%%%%%%%%
%%%%%%%%%%%%%%%%%%%%%%%%%%%%%%%%%%%%%%%%%%%%%%%%%%%%%%%%%%%%%%%%%%%%%%%%%%%%%%%

\section{Related Work}
\label{app:related_work}

\textbf{Latent-Space Optimization with Diffusion Models} Recent work has made use of diffusion models for LSO to solve inverse problems. BIRD \cite{chihaoui2024blind} applies this concept for blind image restoration and estimate jointly the reconstruction and the parameters of the (unknown) degradation operator. They use the deterministic DDIM \cite{song2020denoising} sampling scheme and optimize the initial noise input. They directly backpropagate the gradients through the DDIM sampling scheme and make use of $10$ sampling steps in most of their experiments. DMPlug \cite{wang2024dmplug} proposes a similar framework, i.e., optimizing the initial latent code such that the output of the deterministic DDIM sampler is consistent to measurements. In their experiments the authors make use of $3$ DDIM sampling steps to parametrise the reverse process. The memory requirements of both approaches scales in the same way a D-Flow \cite{ben2024d}, see Section \ref{sec:complexity}. 

\textbf{Flow-based Solvers for Inverse Problems.}
A growing line of work uses \emph{pre-trained} flow-matching and rectified-flow models as implicit priors for inverse problems by modifying the sampling dynamics to incorporate data consistency, rather than retraining the model. In the setting of linear inverse problems, \citet{pokle2024trainingfree} propose a training-free solver for inverse problems using pre-trained flow models, with theoretically motivated weighting and conditional optimal transport (OT) paths. More recently, guidance schemes specialized to flow matching have appeared: \citet{yan2025fig} introduce interpolant guidance (FIG) that steers reverse-time sampling using measurement interpolants; and \citet{Kim_2025_ICCV} adapt posterior-sampling ideas to flows by decomposing the flow ODE into clean/noise components (via a flow analogue of Tweedie-based relations) and injecting likelihood gradients and stochasticity to sample from the posterior \cite{chung2022diffusion}. Complementary to guidance-based posterior sampling, plug-and-play hybrids have also been explored: \citet{martin2025pnpflow} defines a time-dependent denoiser induced by a pre-trained flow-matching model and alternates data-fidelity updates with projections/denoising along the learned path. Finally, conditional and invertible modeling has a longer history in inverse problems: \citet{ardizzone2019analyzing} uses conditional invertible neural networks to represent inverse maps and uncertainty, illustrating the tradeoff between task-specific conditional training and test-time inference with a fixed unconditional prior. %\textcolor{red}{need to say whats different in our case.}

\textbf{Guidance and Optimal Control for Generative Flows.}
Beyond inverse problems, several works formulate \emph{controlled generation} for ODE-based generators as test-time trajectory optimization, often by differentiating through the sampler. \citet{liu2023flowgrad} shows how to backpropagate guidance gradients to intermediate times of a generative ODE, enabling controllable generation without retraining, while \citet{ben2024d} (D-Flow) differentiates through the flow sampler to optimize control objectives over the generated sample. \citet{Patel_2025_ICCV} propose a gradient-free steering and inversion strategy for rectified-flow models and demonstrate applications including linear inverse problems and editing. From an optimal-control perspective, \citet{wang2024ocflow} presents a training-free guided flow-matching framework derived from optimal control, interprets backprop-through-ODE guidance as special cases, and provides algorithms and analysis for guided generation.

\textbf{Multiple-Shooting/Single-Shooting in ML.}
Single-shooting is the default in neural ODEs and generative flows: one optimizes parameters (and possibly an initial latent) while enforcing dynamics by a single forward solve over the full time horizon \cite{chen2018neural}. This can become numerically unstable for long horizons, stiff dynamics, or when strong guidance terms create sharp transients. Multiple-shooting, a classical remedy from optimal control \cite{bock1984multiple}, introduces intermediate states as optimization variables and enforces continuity constraints between segments, improving stability and enabling time-parallelism. In modern ML, \citet{massaroli2021dmsl} formalize differentiable multiple-shooting layers as implicit models with parallelizable root finding, and \citet{turan2021msneuralode} study multiple-shooting training for neural ODEs on time-series benchmarks. Our approach aligns with this viewpoint by splitting the generative trajectory into segments and enforcing (soft) continuity, trading a larger but better-conditioned optimization for improved robustness in inverse-problem guidance. The recently proposed DRIP framework \cite{eliasof2023drip} also implement a similar backward sweep to our Algorithm~\ref{alg:coord_descent}. However, we use a fixed pre-trained flow model, whereas DRIP also learns the underlying dynamics. 

\section{Numerical Details}
\label{app:numerical_details}
\subsection{Explicit Euler Discretization}
\label{sec:explicity_euler}
In Algorithm~\ref{alg:coord_descent} we used the notation $F_{k,k+1}(\Bx_{k})$ for the integration of the dynamics from $t_{k}$ to $t_{k+1}$ starting at $\Bx_{k}$. We can realize this using a single explicit Euler step with $\Delta_k = t_{k+1} - t_k$:  
\begin{align}
    F_{k+1,k}^\text{EE}(\Bx) := \Bx + \Delta_k v_\theta(\Bx, t_k).
\end{align}
Substituting this into the trajectory loss in Equation \Cref{eq:traj_loss}, we obtain
\begin{align}
\mathcal{J}^\text{EE}(\Bx_0, \dots, \Bx_K) :=  \frac{\alpha}{2} \| \Bx_*^i - \Bx_K \|^2 + \lambda \mathcal{R}(\Bx_0)   + \frac{\gamma}{2} \sum_{k=1}^K \| \Bx_k - [\Bx_{k-1} + \Delta_{k-1} v_\theta(\Bx_{k-1}, t_{k-1})] \|^2.  \label{eq:traj_loss_ee}
\end{align}
Following the coordinate-wise backward sweep introduced in Section \ref{sec:optimizingMSFLOW}, we have to solve the following optimization problem 
\begin{align}
\Tilde{\Bx}_k = \argmin_u & ~\mathcal{J}_k(u) := \frac{\gamma}{2}\| u - z_k \|_2^2  +  \frac{\gamma}{2}\| \Tilde{\Bx}_{k+1}\!-\!(u\!+\!\Delta_k v_\theta(u, t_k)) \|^2
\end{align}
for $k=K\!-\!1, \dots, 1$ and $z_k = F_{k-1, k}^\text{EE}(\Bx_{k-1})$. The gradient is given by
\begin{align}
        \nabla_u \mathcal{J}_k(u) = (u - z_k) - (I + \Delta_k J_{v_\theta}(u))^T(\Tilde{\Bx}_{k+1} - (u + \Delta_k v_\theta(u, t_{k}))),
\end{align}
with $\Tilde{\Bx}_{k+1}$ being the minimizer for $\mathcal{J}_{k+1}$. Computing this gradient directly requires one backward pass of the flow model. However, in our experiments, we observe that we can approximate $I + \Delta t_k J_{v_\theta} \approx I$. In diffusion models, neglecting the Jacobian of the model is quite standard, see e.g. \cite{poole2022dreamfusion}. Note that this approximation also gets better if $\Delta t_k \approx 0$, i.e., if we increase the number of shooting points. Using this approximation, we obtain a Jacobian-free gradient surrogate
\begin{align*}
    \nabla_u &\mathcal{J}_k(u) \approx (u - z_k) - (\Tilde{\Bx}_{k+1} - (u + \Delta_k v_\theta(u, t_{k}))), 
\end{align*}
which numerically still gives stable convergence.

\subsection{Trajectory Optimization}
We compute the gradient for all intermediate control points as:
\begin{align}
    \nabla_{\Bx_K}  \mathcal{J} &= - \alpha ( \Bx_*^i - \Bx_K) + \gamma (\Bx_K - F_{k-1, k}(\Bx_{k-1})), \\
    \begin{split}
            \nabla_{\Bx_k}  \mathcal{J} &= - \gamma J_{F_{k,k+1}}(\Bx_k)^T(\Bx_{k+1} - F_{k,k+1}(\Bx_k)) \\ &+ \gamma (\Bx_k - F_{k-1,k}(\Bx_{k-1})), ~k=1, \dots, K\!-\!1, 
    \end{split}\\
    \nabla_{\Bx_0} \mathcal{J} &= -\gamma J_{F_{0,1}}(\Bx_0)^T(\Bx_1\!-\!F_{0,1}(\Bx_0))\!+\!\gamma \nabla\mathcal{R}(\Bx_0) 
\end{align}
Here $J_{F_{k,k+1}}$ denotes the Jacobian of the flow update from $t_k$ to $t_{k+1}$. Given the trajectory $\{\Bx_0^{(\ell-1)}, \dots, \Bx_K^{(\ell-1)}\}$ and step size $\eta$.

\begin{enumerate}
    \item \textbf{Update the Final Shooting Point:}
    The gradient for $\Bx_K$ is:
    \begin{align*}
        \nabla_{\Bx_K} \mathcal{J} &= - \alpha ( \Bx_*^i - \Bx_K) + \gamma (\Bx_K - F_{K-1, K}(\Bx_{K-1}^{(\ell-1)}))
    \end{align*}
    The update rule is:
    \begin{equation}\label{eq:update_finalpt}
        \Bx_K^{(\ell)} \leftarrow \Bx_K^{(\ell-1)} - \eta \cdot \nabla_{\Bx_K} \mathcal{J}(\Bx_K^{(\ell-1)}, \Bx_{K-1}^{(\ell-1)})
    \end{equation}
    \item \textbf{Backward Sweep for Intermediate Shooting Points:}
    We iterate backward, using the newly computed neighbour $\Bx_{k+1}^{(\ell)}$ in the gradient calculation. For $k = K\!-\!1, K\!-\!2, \dots, 1$:
    \begin{align*}
        \nabla_{\Bx_k} \mathcal{J} = - \gamma J_{F_{k,k+1}}(\Bx_k)^T(\Bx_{k+1}^{(\ell)} - F_{k,k+1}(\Bx_k)) + \gamma (\Bx_k - F_{k-1,k}(\Bx_{k-1}^{(\ell-1)}))
    \end{align*}
    The update rule is:
    \begin{equation}\label{eq:update_interpts}
        \Bx_k^{(\ell)} \leftarrow \Bx_k^{(\ell-1)}\!- \eta \cdot \nabla_{\Bx_k} \mathcal{J}(\Bx_k^{(\ell-1)},\!\Bx_{k-1}^{(\ell-1)},\!\Bx_{k+1}^{(\ell)})
    \end{equation}
    \item \textbf{Update the Initial Shooting Point:}
    The update for $\Bx_0$ uses the newly computed $\Bx_1^{(\ell)}$ and includes the regularization gradient $\nabla \mathcal{R}(\Bx_0)$.
    \begin{align*}
        \nabla_{\Bx_0} \mathcal{J} &= -\gamma J_{F_{0,1}}(\Bx_0)^T(\Bx_1^{(\ell)}-F_{0,1}(\Bx_0)) + \lambda \nabla\mathcal{R}(\Bx_0)
    \end{align*}
    The update rule is:
    \begin{equation}\label{eq:update_initpt}
        \Bx_0^{(\ell)} \leftarrow \Bx_0^{(\ell-1)} - \eta \cdot \nabla_{\Bx_0} \mathcal{J}(\Bx_0^{(\ell-1)}, \Bx_1^{(\ell)})
    \end{equation}
\end{enumerate}
\vspace{0.5em}
The trajectory for the next iteration is then $\{\Bx_0^{(\ell)}, \dots, \Bx_K^{(\ell)}\}$. This process is repeated until convergence. The algorithm is also described in Algorithm \ref{alg:coord_descent}.

\begin{algorithm*}[t]
    \caption{A single iteration of coordinate descent for trajectory optimization}
    \label{alg:coord_descent}
\begin{algorithmic}[1]
\STATE {\bfseries Input:} Initial trajectory $(\Bx_0^{\text{old}}, \dots, \Bx_K^{\text{old}})$, target state $\Bx_*^i$, dynamics $F_{k-1, k}$, weights $\alpha, \lambda, \gamma$, step size $\eta$.
\STATE
\COMMENT{\textit{Forward Sweep:} Compute consistency targets $\Bz_k$ using old trajectory.}
\FOR{$k=1$ {\bfseries to} $K$} 
    \STATE $\Bz_k \leftarrow F_{k-1, k}(\Bx_{k-1}^{\text{old}})$
\ENDFOR
\STATE
\COMMENT{\textit{Backward Sweep:} Update control points from $K$ down to $0$ using updated neighbors.}
\FOR{$k=K$ {\bfseries down to} $0$} 
    \IF{$k=K$} 
        \STATE Compute $\nabla_{\Bx_K}\mathcal{J}$ using $\Bz_K$:
        \STATE $\quad \nabla_{\Bx_K}\mathcal{J} =  - \alpha ( \Bx_*^i - \Bx_K^\text{old})  + \gamma (\Bx_K^\text{old} -\Bz_K)$
        \STATE $\Bx_K^{\text{new}} \leftarrow \Bx_K^{\text{old}} - \eta \nabla_{\Bx_K}\mathcal{J}$
    \ELSIF{$0 < k < K$}
        \STATE Compute $\nabla_{\Bx_k}\mathcal{J}$ using $\Bx_{k+1}^{\text{new}}$ and $\Bz_k$:
        \STATE $\quad \nabla_{\Bx_k}\mathcal{J} = \gamma (\Bx_k^{\text{old}} - \Bz_k) - \gamma J_{F_{k,k+1}}(\Bx_k^{\text{old}})^T(\Bx_{k+1}^{\text{new}} - F_{k,k+1}(\Bx_k^{\text{old}}))$
        \STATE $\Bx_k^{\text{new}} \leftarrow \Bx_k^{\text{old}} - \eta \nabla_{\Bx_k}\mathcal{J}$
    \ELSIF{$k=0$}
        \STATE Compute $\nabla_{\Bx_0}\mathcal{J}$ using $\Bx_1^{\text{new}}$:
        \STATE $\quad \nabla_{\Bx_0}\mathcal{J} = -\gamma J_{0,1}(\Bx_0^{\text{old}})^T(\Bx_1^{\text{new}}-F_{0,1}(\Bx_0^{\text{old}})) + \lambda \nabla\mathcal{R}(\Bx_0^{\text{old}})$
        \STATE $\Bx_0^{\text{new}} \leftarrow \Bx_0^{\text{old}} - \eta \nabla_{\Bx_0}\mathcal{J}$
    \ENDIF
    \STATE $\Bx_k^{\text{old}} \leftarrow \Bx_k^{\text{new}}$ \COMMENT{Store new value for subsequent steps}
\ENDFOR
\STATE {\bfseries Output:} Updated trajectory $(\Bx_0^{\text{new}}, \dots, \Bx_K^{\text{new}})$.
\end{algorithmic}
\end{algorithm*}

\section{Detailed Complexity Analysis of MS-Flow}
\label{app:complexity_details}
This appendix provides a detailed breakdown of time and memory for the summary in \Cref{sec:complexity}.

\subsection{Single-Shooting D-Flow}
A gradient update in D-Flow optimizes the initial state $\Bx_0$ of the ODE-constrained problem in \Cref{eq:constrained_ode}. This requires:
\begin{enumerate}[nosep]
    \item[(i)] forward integration of the ODE for $n_t$ discretization steps, and 
    \item[(ii)] differentiation through the entire solver.
\end{enumerate}
With standard reverse-mode differentiation, the resulting cost per iteration is $O\!\left(n_t(\mathsf{C}_{\mathrm{fwd}}+\mathsf{C}_{\mathrm{vjp}})\right)$, while peak memory scales as $O(n_t\,\mathsf{M}_{\mathrm{net}})$, due to storing intermediate states and network activations.
Adjoint methods reduce memory to $O(\mathsf{M}_{\mathrm{net}})$ but introduce additional backward ODE solves and increased wall-clock time.

\subsection{Multiple-Shooting MS-Flow}
We consider one outer iteration of \Cref{alg:am_coord_descent}, focusing on the trajectory update \Cref{eq:splitting_traj}. Using $L$ inner coordinate-descent sweeps, each sweep consists of:
\begin{enumerate}[nosep]
    \item[(i)] a forward sweep computing $F_{k-1,k}(\Bx_{k-1})$ for $k=1,\dots,K$, and
    \item[(ii)] a backward sweep updating the block variables $\{\Bx_k\}$ using only local neighbors.
\end{enumerate}
For explicit Euler with $\Delta_k=t_{k+1} - t_k$ the transition map $F_{k,k+1}$ is given as 
\begin{align*}
    F_{k,k+1}(\Bx) = \Bx + \Delta_k\, v_\theta(\Bx,t_k).
\end{align*}
Evaluating all $F_{k,k+1}$ costs $O(K\,\mathsf{C}_{\mathrm{fwd}})$. Exact gradient updates additionally require computing VJPs $J_{F_{k,k+1}}(\Bx_k)^\top \Bv$.
Hence, one sweep costs $O\!\left(K(\mathsf{C}_{\mathrm{fwd}}+\mathsf{C}_{\mathrm{vjp}})\right)$, and one outer iteration costs $O(LK(\mathsf{C}_{\mathrm{fwd}}+\mathsf{C}_{\mathrm{vjp}}))$.

\paragraph{Jacobian-Free Updates}
The Jacobian for the explicit Euler is given as 
\begin{align*}
    J_{F_{k,k+1}}(\Bx) = I + \Delta_k J_{v_\theta(\Bx,t_k)}.
\end{align*}
As discussed in the main text, we omit $J_{v_\theta(\Bx,t_k)}$ and use the approximation $J_{F_{k,k+1}}(\Bx) \approx I$, which eliminates the VJP computation. Under this approximation, one sweep costs
$O(K\,\mathsf{C}_{\mathrm{fwd}})$. 

\paragraph{Memory Scaling}

MS-Flow stores the explicit trajectory variables $\{\Bx_k\}_{k=0}^K$, requiring $O(Kn)$ memory, which is negligible compared to network activations in most networks, as the number of parameters is often orders of magnitude larger than the input dimension. Crucially, gradients depend only on local segments, so the full computational graph is never materialized. The resulting peak activation memory is $O(\mathsf{M}_{\mathrm{net}})$, independent of $K$, as observed empirically in \Cref{fig:timesteps_vs_memory}.

\section{Assumptions and Proofs}
\label{app:assumptions_proofs}
In this section, we outline the assumptions and proofs of our results and discussions of the properties of MS-Flow in \Cref{sec:properties}.

\begin{assumption}[Regularity for block Gauss-Seidel descent]
\label{ass:gs_convergence}
For the fixed outer iterate $\Bx_*^i$, assume:
\begin{enumerate}
    \item $\mathcal{J}_i$ is bounded below on $\R^{n(K+1)}$.
    \item $\mathcal{J}_i$ is continuously differentiable.
    \item (Block Lipschitz gradients) For each $k\in\{0,\dots,K\}$ there exists $L_k>0$ such that, for any vectors $u,v\in\R^n$ and any fixed values of the other blocks,
    \[
    \big\|\nabla_{\Bx_k}\mathcal{J}_i(\dots,u,\dots)-\nabla_{\Bx_k}\mathcal{J}_i(\dots,v,\dots)\big\|_2 \le L_k\|u-v\|_2.
    \]
    \item The step sizes satisfy $0<\eta_k\le 1/L_k$ for all $k$ (or are chosen by backtracking line search guaranteeing sufficient decrease).
\end{enumerate}
\end{assumption}

\begin{proof}[Proof of Proposition \ref{prop:gs_stationarity}]
For a fixed outer iterate $\Bx_*^i$, consider the inner objective $\mathcal{J}_i$ and one Gauss-Seidel sweep $\ell$.
For notational convenience, define the intermediate Gauss-Seidel iterates within sweep $\ell$ by
\[
X^{(\ell,K)} := X^{(\ell-1)}, \qquad X^{(\ell,-1)} := X^{(\ell)},
\]
and for each $k=K,K-1,\dots,0$ let $X^{(\ell,k)}$ be the pre-update iterate used in \Cref{eq:gs_block_update},
\[
X^{(\ell, k)} = [\Bx_0^{(\ell-1)},\dots,\Bx_{k-1}^{(\ell-1)},\Bx_k^{(\ell-1)},\Bx_{k+1}^{(\ell)}, \dots,\Bx_K^{(\ell)}],
\]
and let $X^{(\ell,k-1)}$ denote the post-update iterate obtained after updating block $k$ (so $X^{(\ell,k-1)}$ differs from $X^{(\ell,k)}$ only in block $k$).

\textbf{Step 1: Descent for a single block update.}
Fix a sweep index $\ell$ and a block index $k$. Define the single-block function
\[
\varphi_k(u) := \mathcal{J}_i(\Bx_0^{(\ell-1)},\dots,\Bx_{k-1}^{(\ell-1)},u,\Bx_{k+1}^{(\ell)},\dots,\Bx_K^{(\ell)}).
\]
By Assumption~\ref{ass:gs_convergence}(3), $\nabla \varphi_k$ is $L_k$-Lipschitz. The update \Cref{eq:gs_block_update} is exactly a gradient step on $\varphi_k$:
\[
\Bx_k^{(\ell)} = \Bx_k^{(\ell-1)} - \eta_k \nabla \varphi_k(\Bx_k^{(\ell-1)}),
\qquad 0<\eta_k\le \frac{1}{L_k}.
\]
By the descent lemma for $L_k$-smooth functions, for any $y$,
\[
\varphi_k(y) \le \varphi_k(x) + \langle \nabla \varphi_k(x), y-x\rangle + \frac{L_k}{2}\|y-x\|_2^2.
\]
Applying this with $x=\Bx_k^{(\ell-1)}$ and $y=\Bx_k^{(\ell)}=x-\eta_k\nabla\varphi_k(x)$ yields
\begin{align*}
\varphi_k(\Bx_k^{(\ell)})
&\le \varphi_k(\Bx_k^{(\ell-1)}) - \eta_k \|\nabla\varphi_k(\Bx_k^{(\ell-1)})\|_2^2
     + \frac{L_k\eta_k^2}{2}\|\nabla\varphi_k(\Bx_k^{(\ell-1)})\|_2^2 \\
&= \varphi_k(\Bx_k^{(\ell-1)}) - \Big(\eta_k-\frac{L_k\eta_k^2}{2}\Big)\|\nabla\varphi_k(\Bx_k^{(\ell-1)})\|_2^2 \\
&= \varphi_k(\Bx_k^{(\ell-1)}) - \Big(\frac{1}{\eta_k}-\frac{L_k}{2}\Big)\|\Bx_k^{(\ell)}-\Bx_k^{(\ell-1)}\|_2^2,
\end{align*}
where in the last line we used $\Bx_k^{(\ell)}-\Bx_k^{(\ell-1)}=-\eta_k\nabla\varphi_k(\Bx_k^{(\ell-1)})$.
Translating back to $\mathcal{J}_i$, this is
\begin{equation}
\label{eq:block_descent}
\mathcal{J}_i\!\big(X^{(\ell,k-1)}\big)
\le
\mathcal{J}_i\!\big(X^{(\ell,k)}\big)
-
c_k \,\|\Bx_k^{(\ell)}-\Bx_k^{(\ell-1)}\|_2^2,
\qquad
\end{equation}
with $c_k := \frac{1}{\eta_k}-\frac{L_k}{2} \;>\;0.$

\textbf{Step 2: Monotone decrease and square-summable steps.}
Summing \Cref{eq:block_descent} over $k=K,K-1,\dots,0$ telescopes the intermediate objectives:
\[
\mathcal{J}_i(X^{(\ell)})
=
\mathcal{J}_i\!\big(X^{(\ell,-1)}\big)
\le
\mathcal{J}_i\!\big(X^{(\ell,K)}\big)
-
\sum_{k=0}^K c_k \|\Bx_k^{(\ell)}-\Bx_k^{(\ell-1)}\|_2^2.
\]
Since $X^{(\ell,K)}=X^{(\ell-1)}$, we obtain the per-sweep descent inequality
\begin{equation}
\label{eq:sweep_descent}
\mathcal{J}_i(X^{(\ell)})
\le
\mathcal{J}_i(X^{(\ell-1)})
-
\sum_{k=0}^K c_k \|\Bx_k^{(\ell)}-\Bx_k^{(\ell-1)}\|_2^2.
\end{equation}
This implies monotone decrease, proving item (1). Moreover, by Assumption~\ref{ass:gs_convergence}(1),
$\mathcal{J}_i$ is bounded below, so summing \Cref{eq:sweep_descent} over $\ell=1,2,\dots,T$ and letting $T\to\infty$ yields
\[
\sum_{\ell\ge 1}\sum_{k=0}^K c_k \|\Bx_k^{(\ell)}-\Bx_k^{(\ell-1)}\|_2^2
\le
\mathcal{J}_i(X^{(0)})-\inf \mathcal{J}_i
<\infty.
\]
Since each $c_k>0$, it follows that
$\sum_{\ell\ge 1}\sum_{k=0}^K \|\Bx_k^{(\ell)}-\Bx_k^{(\ell-1)}\|_2^2 < \infty$,
and therefore $\|\Bx_k^{(\ell)}-\Bx_k^{(\ell-1)}\|_2\to 0$ for every $k$, proving item (2).

\textbf{Step 3: Stationarity of accumulation points.}
Let $\bar X$ be any accumulation point of $\{X^{(\ell)}\}_{\ell\ge 0}$. Then there exists a subsequence $\{\ell_j\}$ such that
$X^{(\ell_j)} \to \bar X$ as $j\to\infty$.
Fix any block index $k$. By the definition of $X^{(\ell,k)}$, the only blocks in which $X^{(\ell,k)}$ and $X^{(\ell)}$ can differ are
$0,1,\dots,k$, hence
\[
\|X^{(\ell,k)}-X^{(\ell)}\|_2
\le
\sum_{r=0}^k \|\Bx_r^{(\ell-1)}-\Bx_r^{(\ell)}\|_2.
\]
By item (2), the right-hand side converges to $0$ as $\ell\to\infty$, so along the subsequence we also have
$X^{(\ell_j,k)} \to \bar X$.

Next, the block update \Cref{eq:gs_block_update} can be rewritten as
\[
\nabla_{\Bx_k}\mathcal{J}_i\!\big(X^{(\ell,k)}\big)
=
\frac{1}{\eta_k}\Big(\Bx_k^{(\ell-1)}-\Bx_k^{(\ell)}\Big).
\]
By item (2), the right-hand side tends to $0$ as $\ell\to\infty$, hence
$\nabla_{\Bx_k}\mathcal{J}_i(X^{(\ell_j,k)}) \to 0$ as $j\to\infty$.
Since $\mathcal{J}_i$ is continuously differentiable (Assumption~\ref{ass:gs_convergence}(2)) and $X^{(\ell_j,k)}\to\bar X$,
we may pass to the limit to obtain $\nabla_{\Bx_k}\mathcal{J}_i(\bar X)=0$.
Because $k$ was arbitrary, this holds for all blocks $k=0,\dots,K$, and therefore $\nabla \mathcal{J}_i(\bar X)=0$.
This proves item (3).
\end{proof}

\section{Experimental Settings}
\label{app:experimental_settings}

\subsection{Computed Tomography}
The OrganCMNIST dataset \cite{medmnistv1, medmnistv2} contains \num{23582} abdominal CT images of size $64 \times 64$ pixels, split into $\num{12975}$ training, $\num{2392}$ validation, and $\num{8216}$ test images.
We adopt the affine flow path $x_t = (1 - t)x_0 + t x_1$ and parameterize the flow with a time-dependent UNet \cite{dhariwal2021diffusion} consisting of approximately $8$M parameters.

For D-Flow and MS-flow the initial estimate of $\Bx_0$ is computed following \cite{ben2024d}, as 
\[ \Bx_0 = \sqrt{\beta}\,\Bw(0)+ \sqrt{1-\beta}\,\Bz,\]
where $\Bz\sim\mathcal{N}(0, I)$, and $\Bw(0)=\Bw+\int_{1}^0 v_\theta(\Bw(t), t)$ with $\Bw$ being the filtered back-projection computed from the observed data $\By$. 
For MS-Flow, by first propagating $\Bx_0$ with the given flow $v_\theta$, using the explicit Euler scheme, and then linearly interpolating between the $\Bx_0$ and the end-point of that trajectory. In doing so, the trajectory consistency loss is not equal to zero at the start. 
The initial $\Bx^\ast$ is computed as the endpoint of that trajectory.

The average run times in Table \ref{tab:comp_time} were computed on the same machine with a single NVIDIA GeForce RTX 5090 GPU.
\begin{figure*}[th]
\centering
\begin{subfigure}{.14\textwidth}
\includegraphics[width=1.0\linewidth]{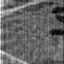}%
\end{subfigure}%
\hfill
\begin{subfigure}{.14\textwidth}
\includegraphics[width=1.0\linewidth]{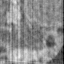}%
\end{subfigure}%
\hfill
\begin{subfigure}{.14\textwidth}
\includegraphics[width=1.0\linewidth]{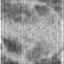}%
\end{subfigure}%
\hfill
\begin{subfigure}{.14\textwidth}
\includegraphics[width=1.0\linewidth]{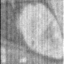}%
\end{subfigure}%
\hfill
\begin{subfigure}{.14\textwidth}
\includegraphics[width=1.0\linewidth]{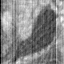}%
\end{subfigure}%
\hfill
\begin{subfigure}{.14\textwidth}
\includegraphics[width=1.0\linewidth]{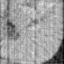}%
\end{subfigure}%
\hfill
\begin{subfigure}{.14\textwidth}
\includegraphics[width=1.0\linewidth]{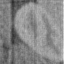}%
\end{subfigure}%

\begin{subfigure}{.14\textwidth}
\includegraphics[width=1.0\linewidth]{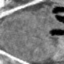}%
\end{subfigure}%
\hfill
\begin{subfigure}{.14\textwidth}
\includegraphics[width=1.0\linewidth]{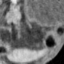}%
\end{subfigure}%
\hfill
\begin{subfigure}{.14\textwidth}
\includegraphics[width=1.0\linewidth]{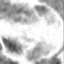}%
\end{subfigure}%
\hfill 
\begin{subfigure}{.14\textwidth}
\includegraphics[width=1.0\linewidth]{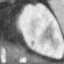}%
\end{subfigure}%
\hfill
\begin{subfigure}{.14\textwidth}
\includegraphics[width=1.0\linewidth]{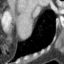}%
\end{subfigure}%
\hfill
\begin{subfigure}{.14\textwidth}
\includegraphics[width=1.0\linewidth]{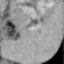}%
\end{subfigure}%
\hfill
\begin{subfigure}{.14\textwidth}
\includegraphics[width=1.0\linewidth]{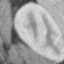}%
\end{subfigure}%

\begin{subfigure}{.14\textwidth}
\includegraphics[width=1.0\linewidth]{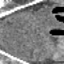}%
\vspace{-2pt}
\end{subfigure}%
\hfill
\begin{subfigure}{.14\textwidth}
\includegraphics[width=1.0\linewidth]{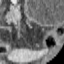}%
\vspace{-2pt}
\end{subfigure}%
\hfill
\begin{subfigure}{.14\textwidth}
\includegraphics[width=1.0\linewidth]{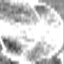}%
\vspace{-2pt}
\end{subfigure}%
\hfill
\begin{subfigure}{.14\textwidth}
\includegraphics[width=1.0\linewidth]{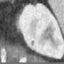}%
\vspace{-2pt}
\end{subfigure}%
\hfill
\begin{subfigure}{.14\textwidth}
\includegraphics[width=1.0\linewidth]{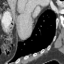}%
\vspace{-2pt}
\end{subfigure}%
\hfill
\begin{subfigure}{.14\textwidth}
\includegraphics[width=1.0\linewidth]{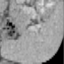}%
\vspace{-2pt}
\end{subfigure}%
\hfill
\begin{subfigure}{.14\textwidth}
\includegraphics[width=1.0\linewidth]{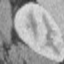}%
\vspace{-2pt}
\end{subfigure}%
\caption{Example reconstructions on OrganCMnist. First row: filtered back-projection (baseline reconstruction). Second row: MS-Flow (our) reconstruction. Third row: Ground truth.} \label{fig:medmnist_organ_results}
\end{figure*}

\subsection{Image recovery}
CelebA dataset \cite{yang2015facial} consists of more than \num{200000} images resized to $128\times 128$ pixels.
As the flow model, we use the pretrained model obtained from \url{https://github.com/annegnx/PnP-Flow}. 
The parameters for all methods are tuned on a small set of test-images. 
The quality metrics are then evaluated on the first 50 images from the validation set.

Similarly to above, for D-Flow and MS-flow the initial estimate of $\Bx_0$ is computed following \cite{ben2024d}, as 
\[ \Bx_0 = \sqrt{\beta}\,\Bw(0)+ \sqrt{1-\beta}\,\Bz,\]
where $\Bz\sim\mathcal{N}(0, I)$, and $\Bw(0)=\Bw+\int_{1}^0 v_\theta(\Bw(t), t)$ with $\Bw$ computed from the observed data $\By$, depending on the given task (as the noisy observation $\By$ for gaussian deblurring, by applying the adjoint of the forward operator to $\By$ for super-resolution, and as a random image for box-inpainting).
The initial trajectory and the initial $\Bx^\ast$ are then computed in the same way as for Computed Tomography.

In \Cref{tab:hyperparams_msflow_dFlow} we provide the parameter choices for all methods and each restoration setting.

\textbf{Flow-Priors} \cite{zhang2024flow} We consider the hyperparameters $\eta$ and $\lambda$, corresponding to the step size for gradient descent and the weighting of the data-consistency term. We to a grid search over $\eta\in \{\num{0.001}, \num{0.01}, \num{0.1}\}$ and $\lambda \in \{ \num{1e2}, \num{1e3}, \num{1e4}, \num{1e5}\}$. We use $N=100$ discretization steps.  

\textbf{OT-ODE} \cite{pokle2024trainingfree} We consider the hyperparameters $t_0$ (initial time), learning rate type $\gamma$. We do a grid search over $t_0 \in \{0.1, 0.2, 0.3, 0.4 \}$ and $\gamma \in \{ \text{constant}, \sqrt{t} \}$. We again use $N=100$ discretization steps. 

\textbf{PnP-Flow} \cite{martin2025pnpflow} We consider the hyperparameters $\alpha$ (exponent of learning rate) and search over $\alpha \in \{0.01, 0.1, 0.3, 0.5, 0.8, 1.0 \}$. We use $N=100$ discretization steps. Following \citet{martin2025pnpflow}, we use the average of $5$ evaluations of the flow field in every step.

\begin{table}[t]
\centering
%\scriptsize
\caption{
Hyperparameter settings across image restoration tasks. For Flow-Prior we give $(\eta, \lambda)$, for OT-ODE $(t_0, \gamma)$, for PnP-Flow $(\alpha)$, for D-Flow $(n_t, \lambda)$ and for MS-Flow $(n_t, L, \gamma, \alpha, \eta, \lambda)$
}
\setlength{\tabcolsep}{3pt}
%\resizebox{\linewidth}{!}{
\begin{tabular}{lcccc}
\toprule
 & \multicolumn{4}{c}{\textbf{Image restoration tasks}} \\
\cmidrule(lr){2-5}
Method & CT & Super-res & Gaussian Debl. & Box-inpaint  \\
\midrule
Flow-Priors 
& -- & (0.01, 1e4) & (0.01, 1e4)  & (0.01, 1e4)  \\
OT-ODE 
& -- & (0.2, constant)  & (0.4, $\sqrt{t}$) &  (0.1, $\sqrt{t}$)  \\
PnP-Flow 
& -- &  (0.01) & (0.01) & (0.01) \\
D-Flow 
&  (3, 0.05) & (3, 0.05) & (3, 1) & (3, 1) \\
MS-Flow 
& (6, 1, 0.01, 0.1, 5, 0.0001) & (6, 5, 0.01, 0.1, 5, 0.0001) & (6, 1, 0.01, 0.1, 5, 0.0001) & (12, 1, 0.01, 1, 5, 0.000)\\
\bottomrule
\end{tabular}%}
\label{tab:hyperparams_msflow_dFlow}
\end{table}

\subsection{Further Results}
Results in Table \ref{tab:dflow_msflow_comparison} investigate the stability over the optimization trajectory for D-Flow and MS-Flow. 
In particular, we compare the average highest PSNR over the optimization trajectory with the average PSNR of the converged trajectory (evaluated at the last iteration).
As previously discussed, D-Flow often shows a comparably high maximum PSNR, but afterward it tends to overfit and deteriorate in performance.
In comparison, MS-Flow shows a stable performance with small differences between the best PSNR and the PSNR of the converged image.

\begin{table}[htbp]
\centering
\caption{Reconstruction quality comparison for CelebA: D-Flow vs. MS-Flow. We show the best PSNR over iterations and the PSNR of the converged image.}
\label{tab:dflow_msflow_comparison}
\begin{tabular}{llcccc}
\toprule
& & \multicolumn{2}{c}{\textbf{PSNR ↑}} & \multicolumn{2}{c}{\textbf{SSIM ↑}} \\
\cmidrule(lr){3-4} \cmidrule(lr){5-6}
\textbf{Task} & \textbf{Method} & Converged & Best & Converged & Best \\
\midrule
\multirow{2}{*}{Box Inpainting} 
    & D-Flow  & 33.96 & 35.84 & 0.9792 & 0.9736 \\
    & MS-Flow & 36.30 & 36.46 & 0.9799 & 0.9800 \\
\midrule
\multirow{2}{*}{Super-resolution} 
    & D-Flow  & 34.67 & 34.69 & 0.9314 & 0.9322 \\
    & MS-Flow & 36.46 & 36.46 & 0.9570 & 0.9570 \\
\midrule
\multirow{2}{*}{Gaussian Deblurring}
    & D-Flow  & 33.48 & 34.66 & 0.8696 & 0.9102 \\
    & MS-Flow & 38.02 & 38.40 & 0.9587 & 0.9636 \\
\bottomrule
\end{tabular}
\end{table}

\begin{figure*}[t]
\centering
\begin{subfigure}{0.14\textwidth}
\includegraphics[width=1.0\linewidth]{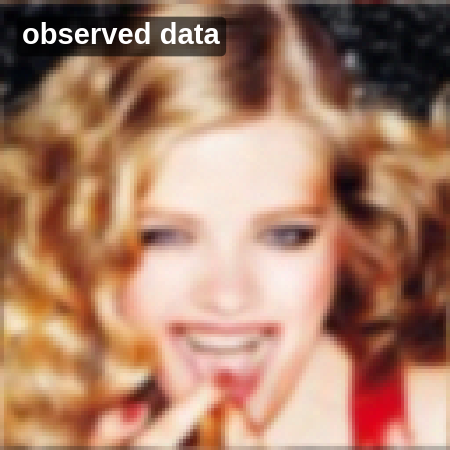}%
\end{subfigure}%
\hfill
\begin{subfigure}{0.14\textwidth}
\includegraphics[width=1.0\linewidth]{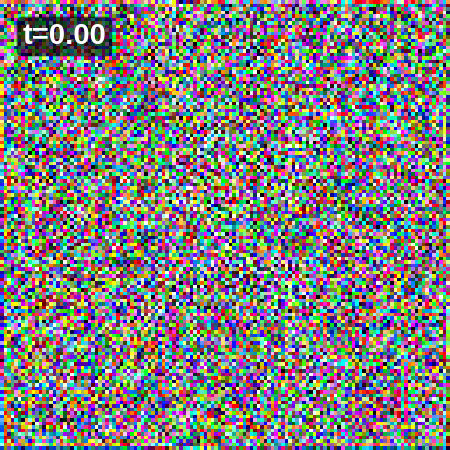}%
\end{subfigure}%
\hfill
\begin{subfigure}{0.14\textwidth}
\includegraphics[width=1.0\linewidth]{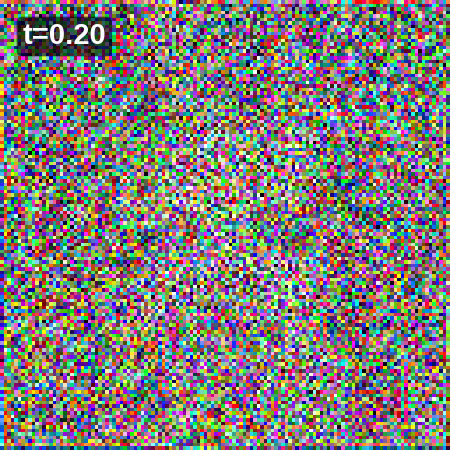}%
\end{subfigure}%
\hfill
\begin{subfigure}{0.14\textwidth}
\includegraphics[width=1.0\linewidth]{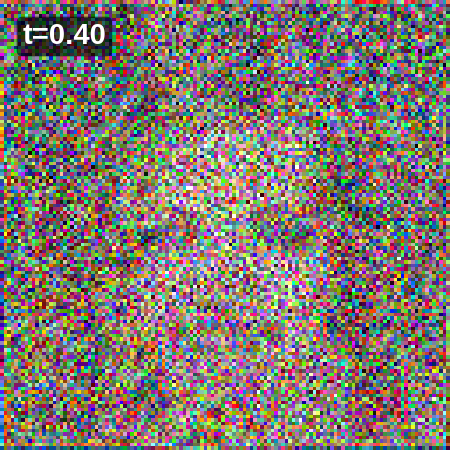}%
\end{subfigure}%
\hfill
\begin{subfigure}{0.14\textwidth}
\includegraphics[width=1.0\linewidth]{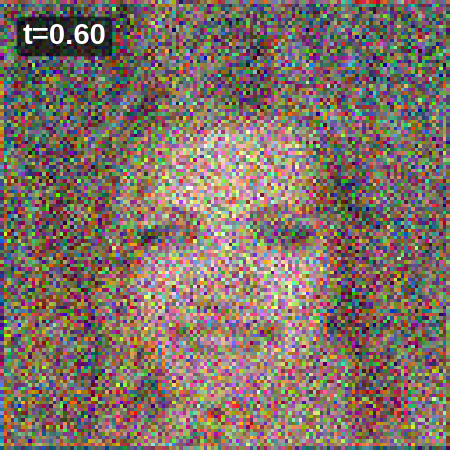}%
\end{subfigure}%
\hfill
\begin{subfigure}{0.14\textwidth}
\includegraphics[width=1.0\linewidth]{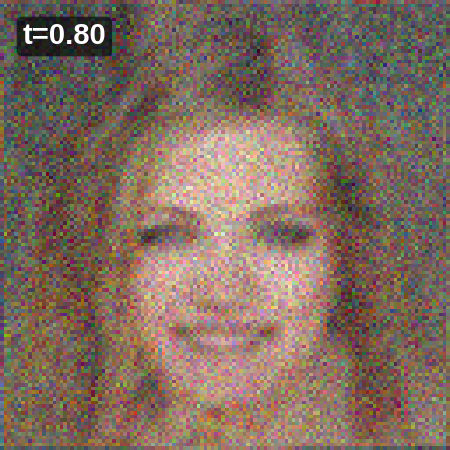}%
\end{subfigure}%
\hfill
\begin{subfigure}{0.14\textwidth}
\includegraphics[width=1.0\linewidth]{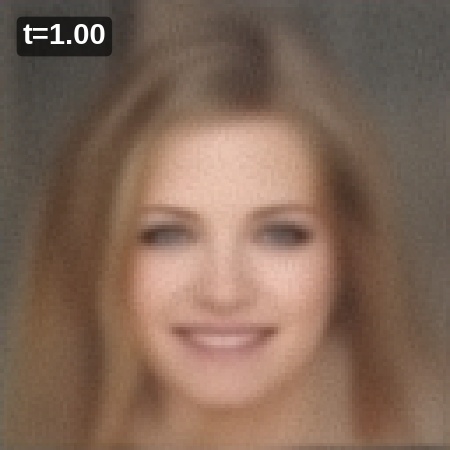}%
\end{subfigure}%
\hfill
\begin{subfigure}{0.14\textwidth}
\includegraphics[width=1.0\linewidth]{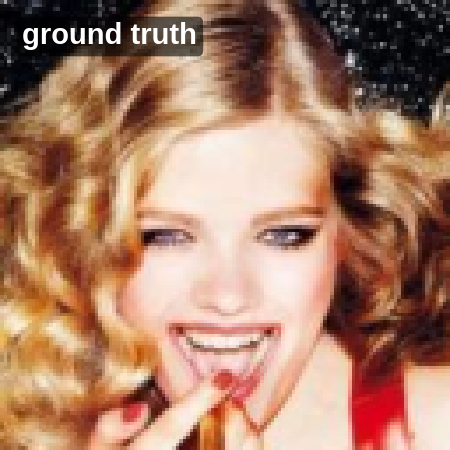}%
\end{subfigure}%
\hfill
\begin{subfigure}{0.14\textwidth}
\includegraphics[width=1.0\linewidth]{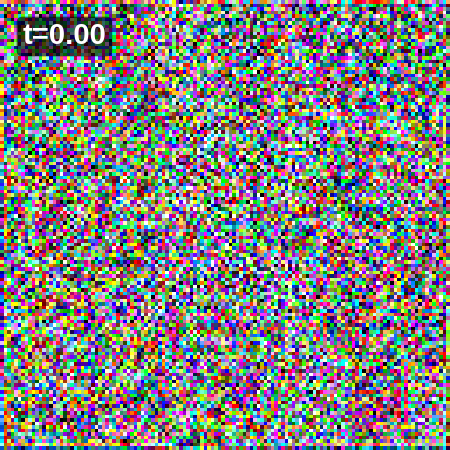}%
\end{subfigure}%
\hfill
\begin{subfigure}{0.14\textwidth}
\includegraphics[width=1.0\linewidth]{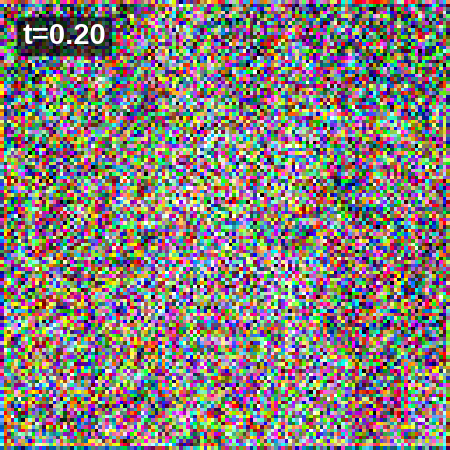}%
\end{subfigure}%
\hfill
\begin{subfigure}{0.14\textwidth}
\includegraphics[width=1.0\linewidth]{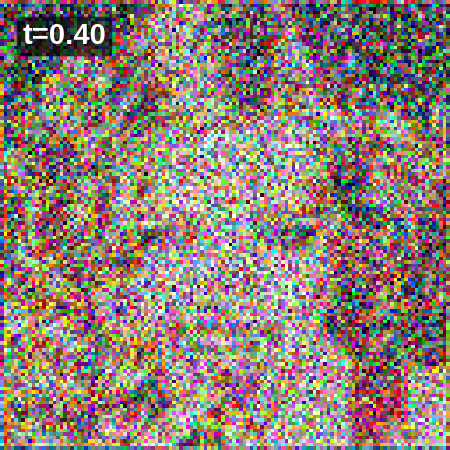}%
\end{subfigure}%
\hfill
\begin{subfigure}{0.14\textwidth}
\includegraphics[width=1.0\linewidth]{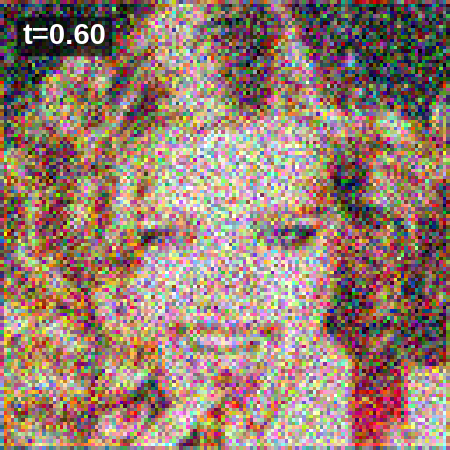}%
\end{subfigure}%
\hfill
\begin{subfigure}{0.14\textwidth}
\includegraphics[width=1.0\linewidth]{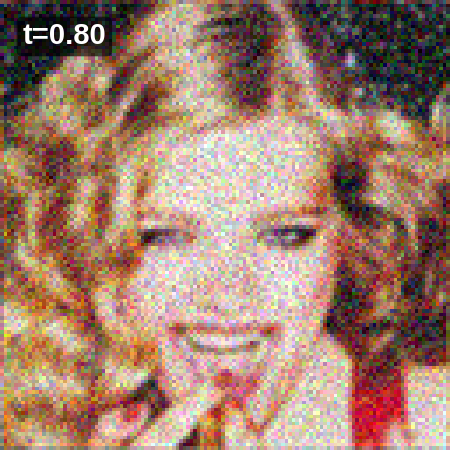}%
\end{subfigure}%
\hfill
\begin{subfigure}{0.14\textwidth}
\includegraphics[width=1.0\linewidth]{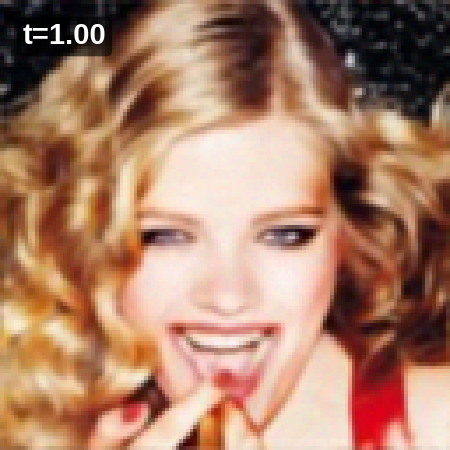}
\end{subfigure}
    \caption{Top: initial shooting-point initialization. Bottom: converged shooting points after optimization. Results shown for Gaussian deblurring, alongside the measurements and ground-truth image.}
    \label{fig:trajectory}
\end{figure*}

\begin{figure*}[t]
\centering
\begin{subfigure}{0.14\textwidth}
\includegraphics[width=1.0\linewidth]{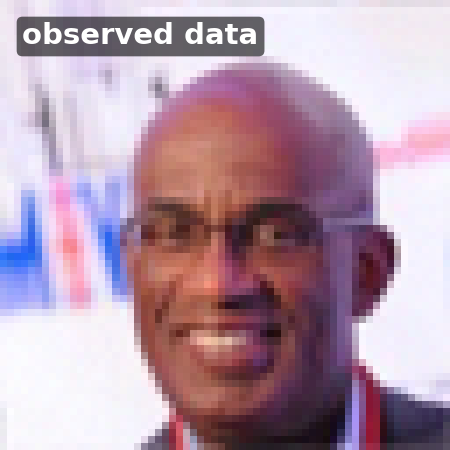}%
\end{subfigure}%
\hfill
\begin{subfigure}{0.14\textwidth}
\includegraphics[width=1.0\linewidth]{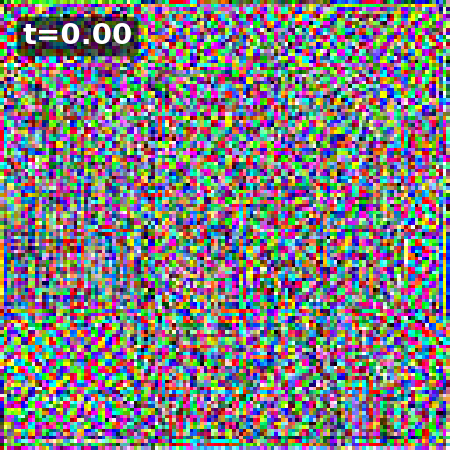}%
\end{subfigure}%
\hfill
\begin{subfigure}{0.14\textwidth}
\includegraphics[width=1.0\linewidth]{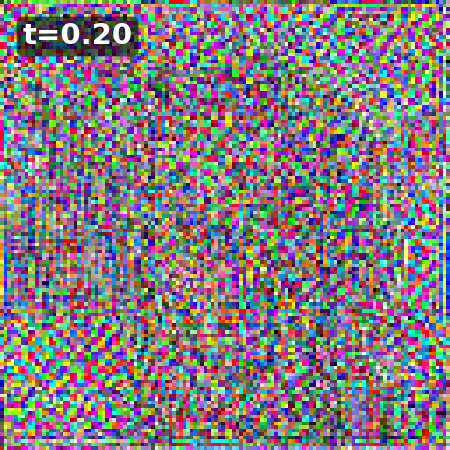}%
\end{subfigure}%
\hfill
\begin{subfigure}{0.14\textwidth}
\includegraphics[width=1.0\linewidth]{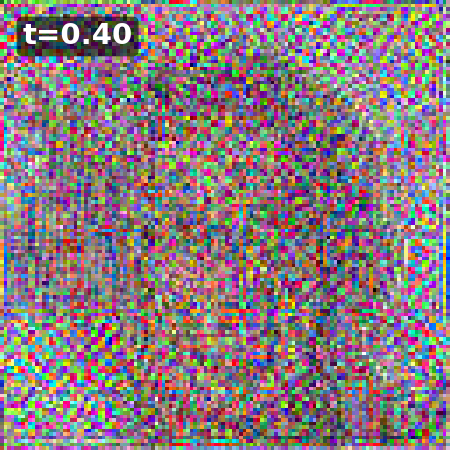}%
\end{subfigure}%
\hfill
\begin{subfigure}{0.14\textwidth}
\includegraphics[width=1.0\linewidth]{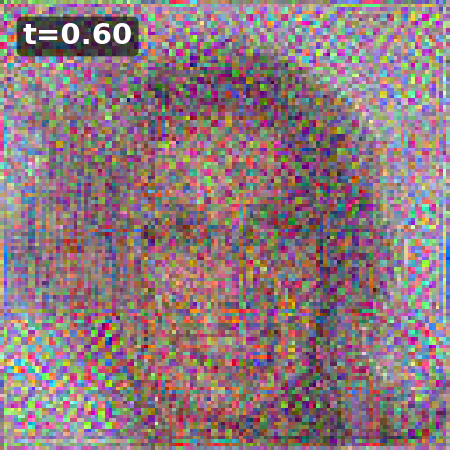}%
\end{subfigure}%
\hfill
\begin{subfigure}{0.14\textwidth}
\includegraphics[width=1.0\linewidth]{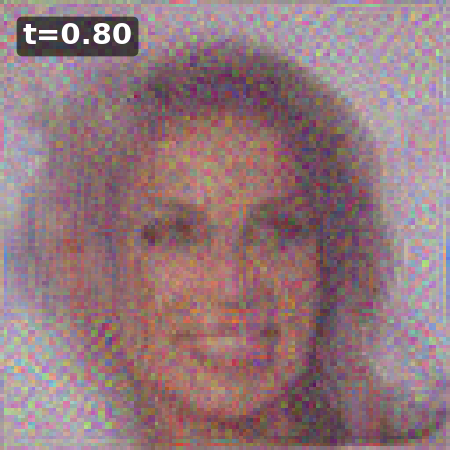}%
\end{subfigure}%
\hfill
\begin{subfigure}{0.14\textwidth}
\includegraphics[width=1.0\linewidth]{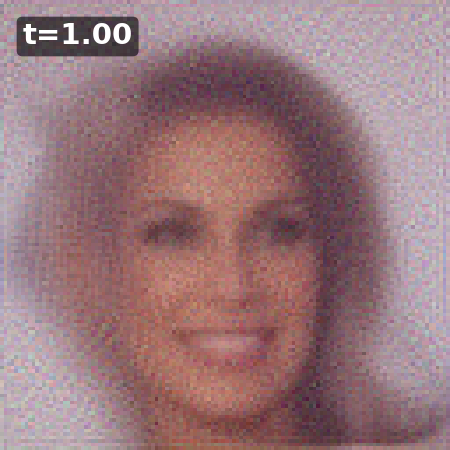}%
\end{subfigure}%
\hfill
\begin{subfigure}{0.14\textwidth}
\includegraphics[width=1.0\linewidth]{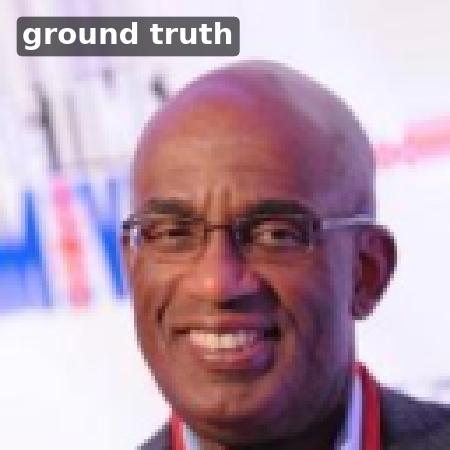}%
\end{subfigure}%
\hfill
\begin{subfigure}{0.14\textwidth}
\includegraphics[width=1.0\linewidth]{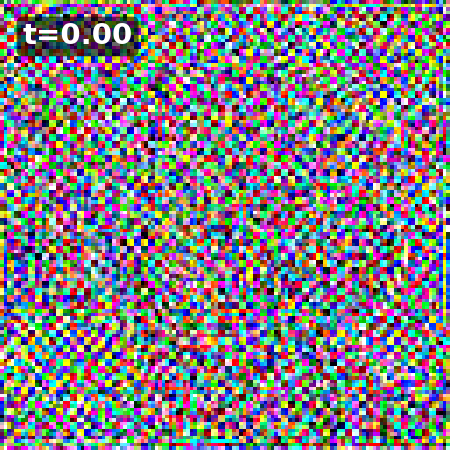}%
\end{subfigure}%
\hfill
\begin{subfigure}{0.14\textwidth}
\includegraphics[width=1.0\linewidth]{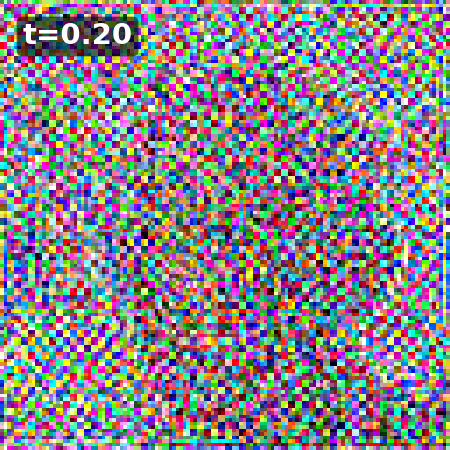}%
\end{subfigure}%
\hfill
\begin{subfigure}{0.14\textwidth}
\includegraphics[width=1.0\linewidth]{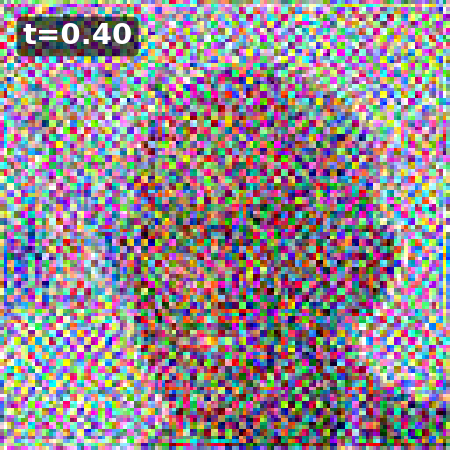}%
\end{subfigure}%
\hfill
\begin{subfigure}{0.14\textwidth}
\includegraphics[width=1.0\linewidth]{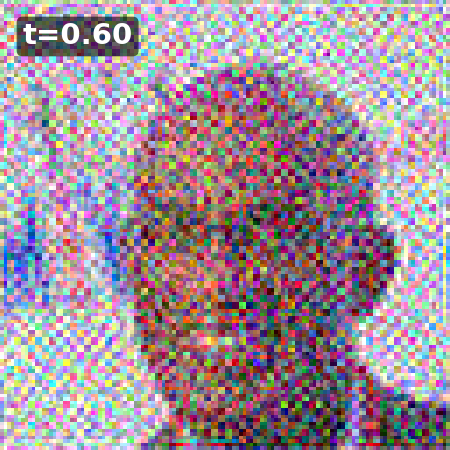}%
\end{subfigure}%
\hfill
\begin{subfigure}{0.14\textwidth}
\includegraphics[width=1.0\linewidth]{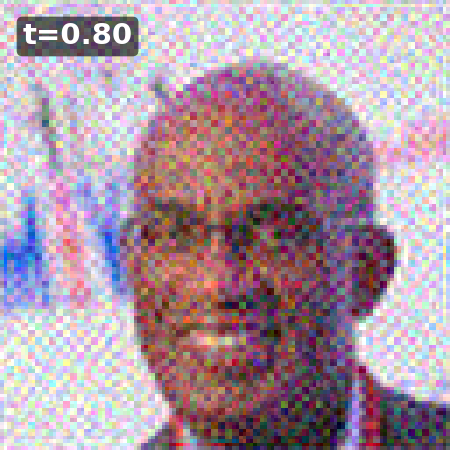}%
\end{subfigure}%
\hfill
\begin{subfigure}{0.14\textwidth}
\includegraphics[width=1.0\linewidth]{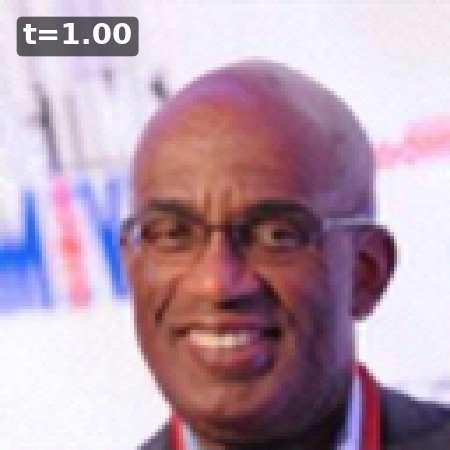}
\end{subfigure}
    \caption{Top: initial shooting-point initialization. Bottom: converged shooting points after optimization. Results shown for super-resolution, alongside the measurements and ground-truth image.}
    \label{fig:trajectory_superresolution}
\end{figure*}

\section{Application to Latent Flow Models}
\label{app:latent_flow}
In latent flow models, the dynamics of the flow model are defined in latent space. We refer to the image space $\mathcal{X}$  and the latent space with $\mathcal{Z}$. We define the encoder at $E:\mathcal{X} \to \mathcal{Z}$ and the decoder as $D:\mathcal{Z} \to \mathcal{X}$. The trained flow model $v_\theta:\mathcal{Z} \times [0,1] \to \mathcal{Z}$ acts only in latent space. Sample generation is then performed via 
\begin{align}
    \frac{d \Bz}{dt} = v_\theta(\Bz,t ), \quad t \in [0,1], \Bz(0) = \Bz_0,
\end{align}
with our final sample as $\Bx = D(\Bz(1))$. So, the dynamics are fully in latent space, and only at the final step do we decode back to the image space. For Latent MS-Flow, we introduce latent control points 
\begin{align}
    \{ \Bz_0, \dots, \Bz_K \}, 0=t_0 < \dots < t_K =1 
\end{align}
and a terminal variable $\Bz_*$ in latent space. The augmented objective is then
\begin{subequations}
    \begin{align}
     Z^{i+1} &= \arg \min_{[\Bz_0, \dots, \Bz_K]} \frac{\alpha}{2} \| \Bz_*^i - \Bz_K \|^2+ \lambda \mathcal{R}(\Bz_0)  +\frac{\gamma}{2} \sum_{k=1}^K \| \Bz_k - F_{k-1,k}(\Bz_{k-1}) \|^2  
\label{eq:splitting_traj_latent} \\
    \Bz_*^{i+1} &= \arg \min_{\Bz_*} \Phi(D(\Bz_*)) + \frac{\alpha}{2} \| \Bz_* - \Bz_K^{i+1} \|^2.\label{eq:splitting_data_latent}
    \end{align}
\end{subequations}
As the terminal loss $\Phi$ is usually defined in image space, the data-consistency step requires backpropagation of the decoder.

\subsection{Implementation Details}
Stable Diffusion 3.5\footnote{\url{https://huggingface.co/stabilityai/stable-diffusion-3.5-medium}} is trained via classifier-free guidance (CFG) \cite{ho2022classifier}. We use a CFG weight of $2.0$ and use the prompt ``a high quality image of a face.'' as the conditional text input, both for FlowDPS \cite{Kim_2025_ICCV} and MS-Flow. For MS-Flow, we initialize $\Bz_*^0$ as a zero vector. We then initialize the shooting points $[\Bz_0, \dots, \Bz_K]$ as a linear interpolation between $\Bz_0 \sim \mathcal{N}(0,I)$ and $\Bz_*^0$. We use $K=29$ shooting points, which is the number of integration steps usually chosen for Stable Diffusion 3.5. We use $10$ inner steps and $25$ alternating minimization steps in total, with $\alpha=\gamma=0.06$.

\end{document}